\let\doendproof\endproof
\renewcommand\endproof{~\hfill\qed\doendproof}
\definecolor {infocolor} {rgb} {0.6,0.6,0.6}
\definecolor {sepia} {rgb} {0.75,0.30,0.15}
\newcommand {\mathset} [1] {\ensuremath {\mathbb {#1}}}
\newcommand {\R} {\mathset {R}}
\newcommand {\script} [1] {\ensuremath {\mathcal {#1}}}
\newcommand {\eps} {\varepsilon}
\DeclareMathOperator {\ply}{\Delta}
\DeclareMathOperator {\nextstep}{next\_step}
\DeclareMathOperator {\picksensor}{pick\_sensor}
\newcommand{\marrow}{\marginpar[\hfill$\longrightarrow$]{$\longleftarrow$}}
\renewcommand{\remark}[3]{\textcolor{blue}{\textsc{#1 #2:}}
\textcolor{red}{\marrow\textsf{#3}}}
\renewcommand{\remark}[3]{\relax}
\newcommand{\maarten}[2][says]{\remark{Maarten}{#1}{#2}}
\newtheorem {observation}[theorem] {Observation}
\title{Tracking Moving Objects with Few Handovers}
\author{David Eppstein \and Michael T. Goodrich \and Maarten L\"offler}
\institute{Dept. of Computer Science, Univ. of California, Irvine}
\date{}
\begin{document}

\maketitle

\begin {abstract}
We study the online problem of assigning a moving point 
to a base-station region that contains it. 
For instance, the moving object could represent a cellular phone and the base 
station could represent the coverage zones of 
cell towers. 

Our goal is to minimize the number of \emph{handovers} that occur when 
the point moves outside its assigned
region and must be assigned to a new one. 
We study this problem in terms of 
a competitive analysis measured
as a function of $\ply$, the \emph{ply} of 
the system of regions, that is, the maximum number of regions
that cover any single point. 

In the offline version of this problem,
when object motions are known in advance, a simple greedy strategy 
suffices to determine an optimal assignment of objects to base stations, 
with as few handovers as possible. 
For the online version of this problem for
moving points in one dimension, we present a deterministic 
algorithm that achieves a competitive ratio of $O(\log\ply)$ with
respect to the optimal algorithm, 
and we show that no better ratio is possible. 
For two or more dimensions, we present a randomized online 
algorithm that achieves a competitive ratio of $O(\log\ply)$ with
respect to the optimal algorithm, and a deterministic algorithm that achieves a competitive ratio of $O(\ply)$;
again, we show that no better ratio is possible.
\end {abstract}



\section {Introduction}
A common problem in wireless sensor networks involves the online
tracking of moving objects~\cite{aekd-tfmt-10,cysa-atdp-05,%
gtzts-stpea-10,hh-ttqw-05,ppk-eqttt-03,yz-mdot-09,zjr-iddsc-02}. 
Whenever a moving object leaves a region corresponding to its
tracking sensor, a nearby sensor  must take over the job of tracking the object.
Similar \emph{handovers} are also used in cellular phone services to track moving customers~\cite{tekinay1991handover}.
In both the sensor tracking and cellular phone applications, handovers involve considerable 
overhead~\cite{gtzts-stpea-10,hh-ttqw-05,ppk-eqttt-03, tekinay1991handover,zjr-iddsc-02},
so we would like to minimize their number.

Geometrically, we can abstract the problem in terms of a set of $n$ closed regions in $\R^d$, for a constant $d$, which represent the sensors or cell towers.
We assume that any pair of regions 
intersects at most a constant number of times, as would be the case,
say, if they were unit disks (a common geometric
approximation used for wireless 
sensors~\cite{aekd-tfmt-10,cysa-atdp-05,gtzts-stpea-10,hh-ttqw-05,ppk-eqttt-03,zjr-iddsc-02}).
We also have one or more moving entities,
which are represented as points traveling along 1-dimensional curves (which we do not assume to  be smooth, algebraic, or otherwise well-behaved, and which may not be known or predictable by our algorithms) with a time stamp 
associated to each point on the curve (Figure~\ref{fig:example}).

  \eenplaatje[scale=0.8]{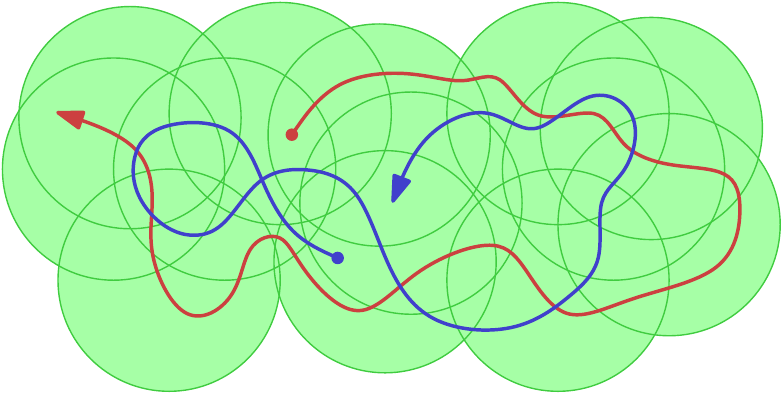} {\label{fig:example} Example input}

We need to track the entities via regions that respectively contain them; 
hence, for each moment in time, 
we must assign one of the regions to each entity, $p$,
with the requirement that $p$ is 
inside its assigned region at each moment in time. 
Finally,
we want to minimize the number of 
times that we must change
the assignment of the region tracking an entity, so as to minimize
the number of handovers.

We also consider a generalized version of this problem, where
each point $p$ is required to be assigned to $c$ regions at each 
moment in time.
This generalization is motivated by the need for 
\emph{trilateration} in cellular networks and wireless sensor 
networks~\cite{hellebrandt2002estimating,yl-qtcbi-10},
where directional information from three or more 
sensors is used to identify the coordinates of a moving point.
  
\subsection{Related Work}
There has been considerable previous work in the wireless sensor
literature on mobile object tracking. So, rather than providing a
complete review of this area, let us simply highlight some of the
most relevant work from the wireless sensor literature.

Cao {\it et al.}~\cite{cysa-atdp-05} study 
the problem of modeling an object moving along a straight-line
trajectory among uniformly-distributed 
unit-disk sensors. Their analysis involves a
probabilistic study of when tracking is feasible if sensors enter their
tracking states independently at random.
Alaybeyoglu {\it et al.}~\cite{aekd-tfmt-10}
also study the object tracking problem using uniformly-distributed 
unit disks to model sensors,
with their focus on the problem of identifying the tracking sensor
with strongest signal in each case.

Zhou {\it et al.}~\cite{zjr-iddsc-02} introduce the idea of using
handovers to reduce energy in mobile object tracking problems among
wireless sensor networks.
Pattem {\it et al.}~\cite{ppk-eqttt-03} study energy-quality trade-offs
for various strategies of mobile object tracking, including one
with explicit handovers.
He and Hou~\cite{hh-ttqw-05} likewise study mobile object
tracking with respect to handover minimization, deriving probabilistic
upper and lower bounds based on distribution assumptions about 
the moving objects and wireless sensors.
Ghica {\it et al.}~\cite{gtzts-stpea-10} study the 
problem of tracking an object among sensors modeled as unit disks 
so as to minimize handovers,
using probabilistic assumptions about the object's future location while
simplifying the tracking requirements to discrete epochs of time.

The analysis tool with which we characterize the performance of our
algorithms comes from research in 
\emph{online algorithms}, where problems are defined 
in terms of a sequence of decisions that must be made one at a time, before knowing the sequence of future requests.
Sleator and Tarjan~\cite{st-aelup-85}, introduce
\emph{competitive analysis},
which has been used for
a host of subsequent online algorithms (e.g., see~\cite{be-occa-98}).
In competitive analysis, 
one analyzes an online algorithm by comparing its
performance against that
of an idealized adversary, who can 
operate in an offline fashion, making his choices after seeing the
entire sequence of items.

We are not aware of any previous work that applies competitive analysis to the problem of 
handover minimization.
Nevertheless, this problem 
can be viewed from a computational geometry perspective as an instantiation of the 
\emph{observer-builder} framework of Cho {\it et al.}~\cite{cmp-mnntu-09},
which itself is related to the \emph{incremental motion} model of
Mount {\it et al.}~\cite{mnpsw-cfim-04}, the 
\emph{observer-tracker} model
of Yi and Zhang~\cite{yz-mdot-09}, and the well-studied 
\emph{kinetic data structures} 
framework~\cite{aeg-kbisd-98,bgsz-pekds-97,g-kdssar-98,ghsz-kcud-01}.
In terms of the observer-builder model,
our problem has an \emph{observer} who watches the motion 
of the point(s) we wish to track and a \emph{builder} who maintains
the assignment of tracking region(s) to the point(s).
This assignment would define a set of Boolean \emph{certificates}, which become \emph{violated} when a
point leaves its currently-assigned tracking region. The observer
would notify the builder of any violation, and the builder would use information about the current
and past states of the point(s) to make a new assignment (and define
an associated certificate).
The goal, as in the previous work by
Cho {\it et al.}~\cite{cmp-mnntu-09}, would be to minimize the
number of interactions between the observer and builder, as measured using
competitive analysis.
Whereas Cho {\it et al.}~apply their model to the maintenance of 
net trees for moving points, 
in our case the interactions to be minimized correspond to handovers, and our results supply the algorithms that would be needed to implement a builder for handover minimization.
Yi and Zhang~\cite{yz-mdot-09} study a general online tracking problem, but with a different objective function than ours: when applied to mobile object tracking, 
rather than optimizing the number of handovers, their scheme would
aim to minimize the distance between objects and the
base-station region to which they are each assigned.

Several previous
papers study overlap and connectivity problems for geometric
regions, often in terms of their \emph{ply}, the maximum number of regions that cover any point.
Guibas {\it et al.}~\cite{ghsz-kcud-01} study the
maintenance of connectivity information among moving unit
disks in the kinetic data structure framework.
Miller {\it et al.}~\cite{mttv-sspnn-97} introduce the concept of ply and show how sets of
disks with low ply possess small geometric separators.
Eppstein {\it et al.}~\cite{eg-snprn-08,egt-gortc-09} study road
network properties and algorithms using a model based on 
sets of disks with low ply after outliers are removed.
Van~Leeuwen~\cite{v-basdg-06} studies the
minimum vertex cover problem for disk
graphs, providing an asymptotic FPTAS for this problem on
disk graphs of bounded ply.
Alon and Smorodinsky~\cite{as-cfcsd-06} likewise study 
coloring problems for sets of disks with low ply.

Our problem can also be modeled as a \emph{metrical task system} in which the sensor regions are represented as states of the system, the cost of changing from state to state is uniform, and the cost of serving a request is zero for a region that contains the request point and two for other regions. Known randomized online algorithms for metrical task systems~\cite{IraSei-TCS-98} would give a competitive ratio of $O(\log n)$ for our problem, not as good as our $O(\log\ply)$ result, and known lower bounds for metrical task systems would not necessarily apply to our problem.

\subsection{New Results}
In this paper,
we study the problem of assigning moving points 
in the plane to containing base station regions in an online
setting and use the competitive analysis to characterize the
performance of our algorithms.  
Our optimization goal in these algorithms 
is to minimize the number of \emph{handovers} that occur when 
an object moves outside the range of its currently-assigned base 
station and must be assigned to a new base station. 
We measure the competitive ratio of 
our algorithms as a function of $\ply$, the \emph{ply} of 
the system of base station regions, 
that is, the maximum number of such regions
that cover any single point. 
When object motions are known in advance, 
as in the offline version of the object traking problem, 
a simple greedy strategy 
suffices to determine an optimal assignment of objects to base stations, 
with as few handovers as possible. 
For the online problem, on the other hand,
for moving points in one dimension, we present a deterministic online 
algorithm that achieves a competitive ratio of $O(\log\ply)$, 
with respect to the offline optimal algorithm,
and we show that no better ratio is possible. 
For two or more dimensions, we present a randomized algorithm 
that achieves a competitive ratio of $O(\log\ply)$, 
and a deterministic algorithm that achieves a competitive 
ratio of $O(\ply)$; again, we show that no better ratio is possible.

\section {Problem Statement and Notation} \label {sec:prelims}

Let $\script D$ be a set of $n$ regions in $\R^d$. These regions represent the areas that can be covered by a single sensor. We assume that each region is a closed, connected subset of $\R^d$ and that the boundaries of any two regions intersect $O(1)$ times -- for instance, this is true when each region is bounded by a piecewise algebraic curve in $\R^2$ with bounded degree and a bounded number of pieces. With these assumptions, the arrangement of the pieces has polynomial complexity $O(n^d)$. The \emph {ply} of $\script D$ is defined to be the maximum over $\R^d$ of the number of regions covering any point.  We always assume that $\script D$ is fixed and known in advance.

Let $T$ be the trajectory of a moving point in $\R^d$. We assume that $T$ is represented as a continuous and piecewise algebraic function from $[0,\infty)$ to $\R^d$, with a finite but possibly large number of pieces. We also assume that each piece of $T$ crosses each region boundary $O(1)$ times and that it is possible to compute these crossing points efficiently. We also assume that $T([0,\infty))\subset\cup\script D$; that is, that the moving point is always within range of at least one sensor; this assumption is not realistic, and we make it only for convenience of exposition. Allowing the point to leave and re-enter $\subset\cup\script D$ would not change our results since the handovers caused by these events would be the same for any online algorithm and therefore cannot affect the competitive ratio.

As output, we wish to report a \emph{tracking sequence} $S$: a sequence of pairs $(\tau_i,D_i)$ of a time $\tau_i$ on the trajectory (with $\tau_0=0$) and a region $D_i\in\script D$ that covers the portion of the trajectory from time $\tau_i$ to $\tau_{i+1}$. We require that for all $i$, $\tau_i<\tau_{i+1}$. In addition, for all $i$, it must be the case that $T([\tau_i,\tau_{i+1}])\subseteq D_i$, and there should be no $\tau'>\tau_{i+1}$ for which $T([\tau_i,\tau'])\subset D_i$; in other words, once a sensor begins tracking the moving point, it continues tracking that point until it moves out of range and another sensor must take over.  Our goal is to minimize $|S|$, the number of pairs in the tracking sequence. We call this number of pairs the \emph{cost} of $S$; we are interested in finding tracking sequences of small cost.

Our algorithm may not know the trajectory $T$ completely in advance. In the \emph{offline tracking problem}, $T$ is given as input, and we must find the tracking sequence $S$ that minimizes $|S|$; as we show, a simple greedy algorithm accomplishes this task. In the \emph{online tracking problem}, $T$ is given as a sequence of \emph{updates}, each of which specifies a single piece in a piecewise algebraic decomposition of the trajectory $T$. The algorithm must maintain a tracking sequence $S$ that covers the portion of $T$ that is known so far, and after each update it must extend $S$ by adding additional pairs to it, without changing the pairs that have already been included. As has become standard for situations such as this one in which an online algorithm must make decisions without knowledge of the future, we measure the quality of an algorithm by its \emph{competitive ratio}. Specifically, if a deterministic online algorithm $A$ produces tracking sequence $S_A(T)$ from trajectory $T$, and the optimal tracking sequence is $S^*(T)$, then the competitive ratio of $A$ (for a given fixed set $\script D$ of regions) is
$$\sup_T\frac{|S_A(T)|}{|S^*(T)|}.$$
In the case of a randomized online algorithm, we measure the competitive ratio similarly, using the expected cost of the tracking sequence it generates. In this case, the competitive ratio is
$$\sup_T\frac{E[|S_A(T)|]}{|S^*(T)|}.$$

\maarten {I changed this back to the old definition of having $c$ disjoint sequences, because it's not the same thing: we want to allow newer additions to expire before older additions.}

As a variation of this problem, stemming  from trilateration problems in cellular phone network and sensor network coverage, we also consider the problem of finding tracking sequences with \emph{coverage} $c$. In this setting, we need to report a set of $c$ tracking sequences $S_1, S_2, \ldots, S_c$ for $T$ that are \emph {mutually disjoint} at any point in time: if a region $D$ appears for a time interval $[\tau_i, \tau_{i+1}]$ in one sequence $S_k$ and a time interval $[\sigma_j, \sigma_{j+1}]$ in some other sequence $S_l$, we require that the intervals $[\tau_i, \tau_{i+1}]$ and $[\sigma_j, \sigma_{j+1}]$ are disjoint. We wish to minimize the total cost $\sum_{i=1}^c |S_i|$ of a set of tracking sequences with coverage $c$, and in both the offline and online versions of the problem.

\section {Offline Tracking} \label {sec:static}


Even though we focus on the case where the trajectories of the entities are not known in advance, we also study the offline tracking problem. We will use some of the observations and algorithms from the offline problem in our analysis of algorithms for the online problem.

    \tweeplaatjes[scale=0.75]{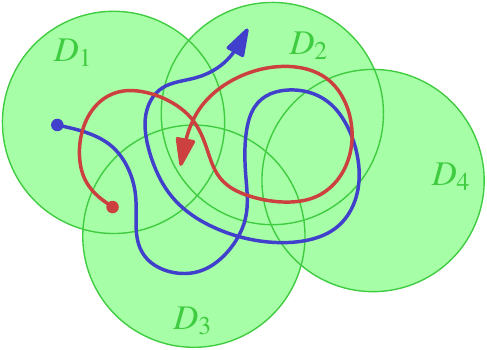} {ex-translation} {An example of of four sensors and two trajectories, in the original setting (a) and the corresponding interval representation (b).}
  
    As mentioned in Section~\ref {sec:dynamic}, we may view the input as a sequence of events that describe when an entity enters or leaves the region belonging to a sensor.
    In other words, we can translate the offline tracking problem into a problem with one continuous dimension (time), where for each sensor we represent the set of times during which the entity is inside or outside that sensor as a set of intervals in this time dimension.
    Figure~\ref {fig:ex-original+ex-translation} shows an example of two different trajectories travelling in the same set of regions $\script D$, and the corresponding interval representations of these trajectories.

  \subsection {Greedy Algorithm for Offline $c$-coverage Tracking}
    We describe a greedy algorithm for the offline problem where $T$ needs to be covered by $c$ disjoint sensors (trilateration) at any time. The original problem is a special case where $c=1$.
    
    In the greedy algorithm, We start with the $c$ longest available segments at the start.
    Now, whenever we reach the end of an interval at time $\tau$, we consider the set of available intervals (intervals that contain $\tau$, and are not currently in use by one of the other $c-1$ trackers), and always switch to the one that continues for the longest time into the future. 
    Figure~\ref {fig:greedy-1} shows a simple example for $c=1$.
  
    \begin {theorem} \label {thm:greedy}
    The greedy algorithm solves the offline tracking problem optimally, in polynomial time.
    \end {theorem}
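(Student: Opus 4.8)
The plan is to work with the interval representation of the trajectory introduced above and to argue optimality by a ``greedy stays ahead'' / exchange argument, first for $c=1$ and then for general $c$; the polynomial time bound will follow from a single left-to-right sweep. For each region $D\in\script D$ the set of times $\tau$ with $T(\tau)\in D$ is a finite union of closed \emph{tracking intervals}, computed from the $O(1)$ crossings of each of the finitely many pieces of $T$ with $\partial D$. The maximality requirement in the definition of a tracking sequence (there is no $\tau'>\tau_{i+1}$ with $T([\tau_i,\tau'])\subseteq D_i$) forces $\tau_{i+1}$ to equal the right endpoint of the tracking interval of $D_i$ containing $\tau_i$; hence a tracking sequence is determined by its sequence of regions, and minimizing $|S|$ is exactly the problem of covering the time domain of $T$ with as few tracking intervals as possible, each used up to its right endpoint. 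For coverage $c$ we instead choose $c$ such covers that are pairwise disjoint in the sense defined above.

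For $c=1$, let $0=t_1<t_2<\dots$ be the switch times produced by greedy (so the $j$-th chosen interval has right endpoint $t_{j+1}$) and let $0=s_1<s_2<\dots$ be the switch times of an arbitrary valid tracking sequence. I would show $t_j\ge s_j$ for all $j$ by induction: assuming $t_j\ge s_j$, if $t_j\le s_{j+1}$ then the $j$-th interval of the other sequence, which contains $[s_j,s_{j+1}]$, also contains $t_j$, so greedy's rule -- pick the interval containing $t_j$ whose right endpoint is largest -- gives $t_{j+1}\ge s_{j+1}$; and if $t_j>s_{j+1}$ then $t_{j+1}>t_j>s_{j+1}$ directly. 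Applying this to an optimal sequence of $m$ intervals, whose last switch time reaches the end of the time domain of $T$, shows greedy also reaches the end using at most $m$ intervals, so greedy is optimal.

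For general $c$ I would generalize the invariant to the sorted vector of right endpoints of the $c$ intervals in use: for every time $\tau$ let $R_G(\tau)$ and $R_\Sigma(\tau)$ be these vectors, sorted non-increasingly, for greedy and for a fixed valid solution $\Sigma$; the claim is that $R_G(\tau)$ dominates $R_\Sigma(\tau)$ coordinatewise. The base case at $\tau=0$ holds since greedy takes the $c$ farthest-reaching intervals containing $0$; the vectors are constant between switch events; and one must check domination is preserved at each switch event. Granting the invariant, at any time the number of greedy trackers whose interval has already expired (a coordinate equal to the current time) is at most the corresponding number for $\Sigma$, because if a dominated non-increasing vector has its $e$ smallest coordinates equal to the current time then the dominating vector has at most $e$ coordinates equal to it; summing this inequality over all times, greedy performs no more switches than $\Sigma$, so $\sum_i|S_i|$ is minimized.

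The step I expect to be the main obstacle is exactly the switch-event case for $c>1$: when greedy and $\Sigma$ currently occupy different tracking intervals even though their right-endpoint vectors are comparable, one must show that greedy's new picks -- each the available interval with the largest right endpoint, where ``available'' excludes only greedy's own retained intervals -- keep the vector dominating $R_\Sigma(\tau)$. This amounts to an exchange lemma saying that swapping an interval in use for an available interval with a larger right endpoint never increases future cost, together with a routine check that the disjointness constraint among the $c$ sequences is maintained; the $c=1$ argument is this lemma in the degenerate case where nothing is retained. The running time is then straightforward: there are only polynomially many tracking intervals in $n$ and the number of pieces of $T$, they are computed within the assumed crossing-computation bound and sorted by endpoint, and greedy runs as a left-to-right sweep that maintains the at most $c$ active trackers in a priority queue keyed by right endpoint, for total time polynomial in the input size.
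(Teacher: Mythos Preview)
Your $c=1$ argument is a correct greedy-stays-ahead proof; the paper instead gives a direct exchange argument that handles general $c$ in one stroke. For $c>1$, however, the coordinatewise-domination invariant you propose is not merely hard to verify at switch events---it is false. With $c=2$ and tracking intervals $I_1=[0,10]$, $I_2=[0,8]$, $I_3=[0,6]$, $I_4=[3,12]$ (plus enough later intervals to keep coverage at least $2$), greedy starts with $\{I_1,I_2\}$, so $R_G=(10,8)$; a valid $\Sigma$ may start with $\{I_1,I_3\}$, switch at time $6$ to $I_4$, and then have $R_\Sigma=(12,10)$ throughout $(6,8)$ while $R_G$ is still $(10,8)$. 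Domination fails in both coordinates even though greedy is optimal on this instance. The structural reason is that $\Sigma$ can switch \emph{earlier} than greedy and thereby temporarily reach farther; any invariant that compares instantaneous right-endpoint vectors cannot survive $\Sigma$'s switch events, so your inductive step breaks precisely where you anticipated trouble, but for a reason that cannot be repaired within that framework.

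The paper's proof sidesteps this by pure exchange: in an arbitrary valid solution, whenever a tracker reaches the end of its interval and does not jump to the farthest-reaching available interval, that interval must be claimed later by another tracker; swap the roles of the two trackers from that moment on, which never increases the number of handovers. Iterating converts any solution into the greedy one. If you want to keep a stays-ahead flavor for $c>1$, the statement that does hold is that the $j$-th handover time of greedy (over all $c$ trackers, sorted increasingly) is at least the $j$-th handover time of $\Sigma$; but establishing that essentially requires the same exchange step, and it does not follow from your vector-domination claim.
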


    \begin {proof}
      Consider an arbitrary solution. First, we may assume whenever a path enters an interval, it stays there until the end of the interval. The only reason why it would leave is to make room for another path. But in that case, we could switch the roles of these two paths, leading to a better solution. Similarly, if a path reaches the end of an interval and does not jump to the longest available interval, this must be because another path uses that interval in the future. But then we can just select it anyway, and when the other path wants to select it, we send it instead to the place where our first path would have been at that time. 
      See Figure~\ref {fig:greedy-2}.
      This gives another solution that is at least as good.
      
      Assuming the input is given as a sequence of events, the greedy algorithm can be trivially implemented in time quadratic in the length of this sequence (this can likely be improved). Since we assume that the arrangement of the regions in $\script D$  has polynomial complexity and each piece of the trajectory has a constant number of intersections with the regions, the length of the event sequence is polynomial in $n$, the number of regions in $\script D$, and $m$, the number of pieces of $T$.
    \end {proof}

    \tweeplaatjes[scale=0.75]{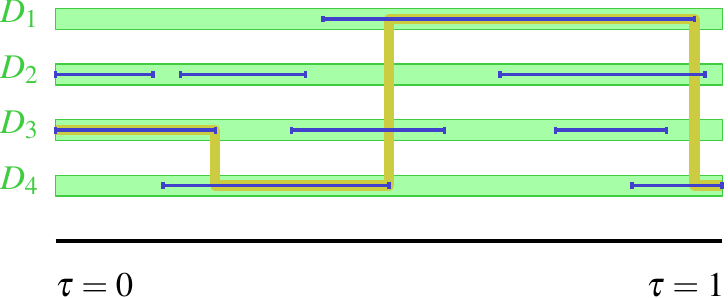} {greedy-2} {Greed is good!}

\section {Online Tracking} \label {sec:dynamic}

 We now move on to the dynamic setting. We assume that we are given the start locations of the trajectory, and receive a sequence of updates extending the trajectory. From these updates we can easily generate a sequence of \emph{events} caused when the trajectory crosses into or out of a region.
We will describe three algorithms for different settings, which are all based on the following observations.

  \maarten {This would be about the right moment to elaborate some more on competitive ratios.}

    \eenplaatje[scale=0.8]{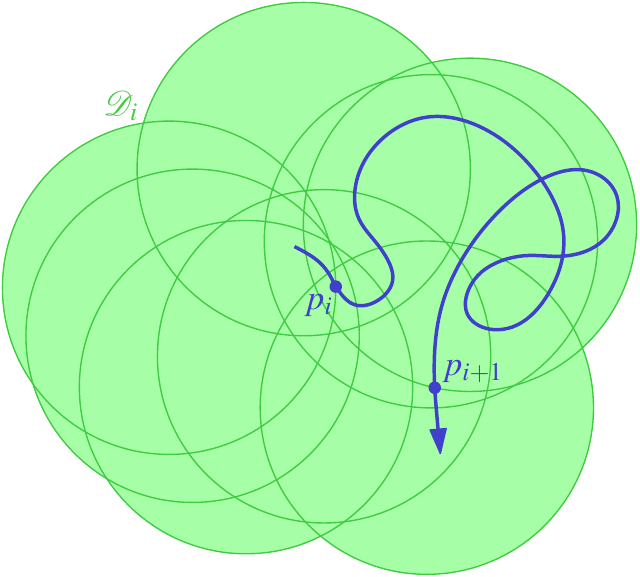} {The set $\script D_i$ of disks containing $p_i$, and the point $p_{i+1}$ where the trajectory leaves the last disk of $\script D_i$.}
    
    Let $T$ be the (unknown) trajectory of our moving entity, and recall that $T(\tau)$ denotes the point in space that the entity occupies at time $\tau$.
    Let $\tau_0$ be the starting time. We will define a sequence of times $\tau_i$ as follows. For any $i$, let $p_i = T(\tau_i)$ be the location of the entity at time $\tau_i$, and let $\script D_i \subset \script D$ be the set of regions that contain $p_i$. For each $D_{ij} \in \script D_i$, let $\tau'_{ij} $ be first the time after $\tau_i$ that the entity leaves $D_{ij}$.
    Now, let $\tau_{i+1} = \max_j \tau'_{ij}$ be the moment that the entity leaves the last of the regions in $\script D_i$ (note that it may have re-entered some of the regions).
    Figure~\ref {fig:disk-collection} shows an example.
    Let $\tau_k$ be the last assigned time (that is, the entity does not leave all disks $\script D_k$ before the the end of its trajectory).

    \begin {observation} \label {obs:k}
      Any tracking sequence $S$ for trajectory $T$ must have length at least $k$.
    \end {observation}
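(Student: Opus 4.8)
The plan is to locate, inside each of the $k$ pairwise-disjoint intervals $(\tau_i,\tau_{i+1}]$ for $i=0,\dots,k-1$, a distinct pair of the given tracking sequence $S$ whose validity expires in that interval; these are then $k$ distinct pairs, so $|S|\ge k$. Throughout I write a generic tracking sequence as $S=\big((\sigma_0,E_0),(\sigma_1,E_1),\dots\big)$ with $\sigma_0=0=\tau_0$, strictly increasing times $\sigma_a$, and $\sigma_{a+1}$ the last instant at which pair $a$ is valid (the end of the trajectory if $a$ is the final pair); using $\sigma$'s here avoids a clash with the symbols reused in the construction of the $\tau_i$. The construction supplies two facts: (a) any region of $\script D$ that contains $p_i=T(\tau_i)$ belongs to $\script D_i$, by the definition of $\script D_i$; and (b) every $D_{ij}\in\script D_i$ is exited by the trajectory no later than time $\tau_{i+1}$, since the first time after $\tau_i$ at which the trajectory leaves $D_{ij}$ is $\tau'_{ij}\le\max_{j'}\tau'_{ij'}=\tau_{i+1}$ (for $i<k$ all these exit times are genuine times on the trajectory, with $\tau_{i+1}$ their maximum; otherwise $\tau_i$ would already be the last assigned time $\tau_k$).

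Next I fix $i\in\{0,\dots,k-1\}$. Since $\tau_i$ precedes the end of the trajectory, $S$ assigns some region just after $\tau_i$; let $(\sigma_a,E_a)$ be the corresponding pair, so $\sigma_a\le\tau_i<\sigma_{a+1}$ (for $i=0$ this is simply the first pair of $S$). It suffices to show $\sigma_{a+1}\le\tau_{i+1}$: combined with $\sigma_{a+1}>\tau_i$ this places the expiry time of pair $a$ in $(\tau_i,\tau_{i+1}]$, as wanted. Suppose instead that $\sigma_{a+1}>\tau_{i+1}$. Then $[\tau_i,\tau_{i+1}]\subseteq[\sigma_a,\sigma_{a+1}]$ forces $T([\tau_i,\tau_{i+1}])\subseteq E_a$, so $p_i\in E_a$ and hence $E_a=D_{ij_0}$ for some $j_0$ by~(a); but the trajectory stays inside $E_a=D_{ij_0}$ throughout $[\tau_i,\sigma_{a+1}]$, so the first time after $\tau_i$ at which it leaves $D_{ij_0}$ is at least $\sigma_{a+1}>\tau_{i+1}$, contradicting~(b).

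Finally, the pairs obtained for different values of $i$ are distinct: their expiry times fall in the pairwise-disjoint intervals $(\tau_i,\tau_{i+1}]$ (disjoint because $\tau_0<\tau_1<\dots<\tau_k$), and the $\sigma_a$ are strictly increasing, so the corresponding pair indices strictly increase with $i$. This exhibits $k$ distinct pairs of $S$, and $|S|\ge k$ follows. The step I would treat with most care is the precise meaning of ``leaving'' a region: since the regions of $\script D$ are closed, a trajectory can touch a region's boundary without exiting it, so I would adopt the convention $\tau'_{ij}=\sup\{\tau'>\tau_i:T([\tau_i,\tau'])\subseteq D_{ij}\}$ --- the same convention already implicit in the requirements that pin down $\sigma_{a+1}$ --- and verify that the inequalities in~(b) and in the contradiction above are all consistent with it; everything else is routine bookkeeping.
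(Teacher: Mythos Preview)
Your argument is correct and follows essentially the same idea as the paper's own proof: at each $\tau_i$ the active region must lie in $\script D_i$, and since no member of $\script D_i$ survives past $\tau_{i+1}$ there must be a handover in $(\tau_i,\tau_{i+1}]$, yielding $k$ distinct pairs. The paper's version is a terse two-sentence sketch; yours spells out the distinctness of the $k$ pairs via the disjoint expiry intervals and flags the closed-region boundary convention, but the underlying reasoning is the same.
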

    
    \maarten {Not sure this needs a proof...}
    \begin {proof}
      For any $i$, a solution must have a region of $\script D_i$ at time $\tau_i$. However, since by construction there is no region that spans the entire time interval $[\tau_i, \tau_{i+1}+\eps]$ (for any $\eps > 0$), there must be at least one handover during this time, resulting in at least $k-1$ handovers, and at least $k$ regions.
    \end {proof}

    \subsection {Randomized Tracking with Logarithmic Competitive Ratio}
      With this terminology in place, we are now ready to describe our randomized algorithm.
      We begin by computing $\tau_0$, $p_0$ and $\script D_0$ at the start of $T$. 
      We will keep track of a set of candidate regions $\script C$, which we initialize to $\script C = \script D_0$, and select a random element from the candidate set as the first region to track the entity.      
      Whenever the trajectory leaves its currently assigned region, we compute the subset $\script C \subset \script D_i$ of all regions that contain the whole
      trajectory from $p_i$ to the event point, and if $\script C$ is not empty
      we select a new region randomly from $\script C$.
      When $\script C$ becomes empty, we have found the next point $p_{i+1}$, giving us a new nonempty candidate set $\script C$. Intuitively, for each point $p_i$, if the set of candidate regions containing $p_i$ is ordered by their exit times, the selected regions form a random increasing subsequence of this ordering, which has expected length $O(\log\ply)$, whereas the optimal algorithm incurs a cost of one for each point $p_i$.
      Refer to Algorithm~\ref {alg:random} for a more formal description of the algorithm.

      \begin {algorithm} [ht]
        \caption {Randomized online tracking algorithm.} \label {alg:random}
        We keep global variables $i$, $\script C$, and $S$.\\
        Initialization:
        \begin {compactenum}
          \item set $i = -1$
          \item call $\nextstep (\tau_0)$
          \item call $\picksensor (\tau_0)$
        \end {compactenum}
        Procedure $\nextstep (\tau)$:
        \begin {compactenum}
          \item increment $i$
          \item set $\tau_i = \tau$
          \item compute $p_i$ and $D_i$ (see Section~\ref {sec:runtime} for efficiency considerations)
          \item set $\script C = D_i$
        \end {compactenum}
        Procedure $\picksensor (\tau)$:
        \begin {compactenum}
          \item take a random element $C \in \script C$
          \item append $(\tau, C)$ to $S$
        \end {compactenum}
        Handle event $(\tau, D)$:
        \begin {compactenum}
          \item if the event is a region-enter-event, ignore it
          \item if $D \in \script C$ then
          \begin {compactenum}
            \item set $\script C = \script C \setminus \{D\}$.
            \item if $\script C$ now is empty, then call $\nextstep (\tau)$
            \item if $D$ is equal to the last region in $S$, then call $\picksensor (\tau)$
          \end {compactenum}         
        \end {compactenum}
        When there are no more events, output $S$.
      \end{algorithm}

      \begin {lemma} \label {lem:randomlog}
        Algorithm~\ref {alg:random} produces a valid solution of expected length $O (k \log \ply)$.
      \end {lemma}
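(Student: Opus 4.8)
The plan is to establish two things separately: (1) the sequence $S$ returned by Algorithm~\ref{alg:random} is a valid tracking sequence, and (2) $E[|S|]=O(k\log\ply)$.

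Validity I would read straight off the pseudocode. Every pair $(\tau,C)$ that $\picksensor$ appends has $C\in\script C$ at that instant, and the invariant on $\script C$ (the regions containing the whole trajectory from $p_i$ to the current event point) guarantees that $T$ lies in $C$ from the start of the current phase up to $\tau$ with no gap. The next pair is appended only when the region-exit event of that same $C$ is processed, i.e. at the first time after $\tau$ at which $T$ leaves $C$; hence the intervals of consecutive pairs abut, times are strictly increasing, each region covers its interval, and the maximality condition in the definition of a tracking sequence holds. Two small checks: $\script C$ is nonempty whenever $\nextstep$ resets it to $\script D_i$, which follows from the standing assumption $T([0,\infty))\subseteq\cup\script D$; and the times at which $\nextstep$ is invoked are exactly $\tau_0,\tau_1,\dots,\tau_k$ as defined before Observation~\ref{obs:k}, because $\script C$ empties precisely when the last region of $\script D_i$ (ordered by exit time) leaves $T$.

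For the length bound I would split $S$ into $k+1$ \emph{phases}, the pairs appended during $[\tau_i,\tau_{i+1}]$, and let $N_i$ be the number of pairs in phase $i$, so $|S|=\sum_{i=0}^k N_i$. By linearity of expectation it suffices to show $E[N_i]\le H_{m_i}$, where $m_i=|\script D_i|\le\ply$ and $H_m=\sum_{j=1}^m 1/j$ is the $m$-th harmonic number. Fix $i$ and order $\script D_i$ as $D^{(1)},\dots,D^{(m_i)}$ by the first time each leaves $T$ after $\tau_i$, breaking ties arbitrarily. The key observation is that the picked regions have strictly increasing indices: at $\tau_i$ we have $\script C=\script D_i$ and pick a uniformly random element, and every later $\picksensor$ call fires when the currently tracked region $D^{(a)}$ leaves $T$, at which moment $\script C$ has been whittled down to exactly $\{D^{(a+1)},\dots,D^{(m_i)}\}$, from which the next region is again chosen uniformly at random. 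Thus the picked indices form a random increasing sequence $0<a_1<\dots<a_{N_i}=m_i$ in which $a_t$ is uniform on $\{a_{t-1}+1,\dots,m_i\}$ given $a_{t-1}$. Writing $g(s)$ for the expected number of remaining picks when $|\script C|=s$, we get $g(0)=0$ and $g(s)=1+\tfrac1s\sum_{r=0}^{s-1}g(r)$, the standard recurrence with solution $g(s)=H_s$; hence $E[N_i]=g(m_i)=H_{m_i}$ (and in the final phase, which may be truncated by the end of $T$, we can only make fewer picks, so $E[N_k]\le H_{m_k}$). Summing gives $E[|S|]\le(k+1)H_\ply=O(k\log\ply)$.

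I expect the only delicate part to be the per-phase reduction to this random-increasing-subsequence process: verifying that, conditioned on the past, each $\picksensor$ call in phase $i$ is uniform over the not-yet-departed regions of $\script D_i$, and that a completed phase ends exactly when the latest-departing region of $\script D_i$ leaves $T$ (so the phase boundaries really are the $\tau_i$). Once that is pinned down, the harmonic recurrence and the resulting $O(\log\ply)$ bound per phase are routine. (If $k=0$ the stated bound should be read as $O((k+1)\log\ply)$; this does not affect the competitive ratio, since any tracking sequence has cost at least $1$.)
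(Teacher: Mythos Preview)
Your proposal is correct and follows essentially the same approach as the paper: decompose the output into the $k+1$ phases $[\tau_i,\tau_{i+1}]$ and observe that within each phase the sequence of picked regions is a uniformly random increasing subsequence of $\script D_i$ ordered by first-exit time, hence has expected length $H_{|\script D_i|}=O(\log\ply)$. The paper merely asserts this expected length; you spell out the recurrence $g(s)=1+\tfrac1s\sum_{r<s}g(r)=H_s$, and your validity argument (reading the invariant on $\script C$ from the pseudocode) is more self-contained than the paper's appeal to the greedy algorithm, but there is no substantive difference in strategy.
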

      
      \begin {proof}
        By construction, we produce a new time stamp $\tau_i$ as soon as the entity leaves all available sensors that contain $p_{i-1}$. This corresponds exactly to the times we switch sensors in the greedy algorithm of Section~\ref {sec:static}, which by Theorem~\ref {thm:greedy} yields an optimal solution.
        
        Next, we prove that between $\tau_i$ and $\tau_{i+1}$ the expected number of new regions is $O (\log \ply)$.
        Recall that, for each $D_{ij} \in \script D_i$, we defined $\tau'_{ij}$ to be the first time after $\tau_i$ that the entity leaves $D_{ij}$.
        These numbers $\tau'_{ij}$ form a set of at most $|\script D_i|$ numbers. At each call to $\picksensor$, we select a random number from this set that is larger than any number we chose before.
        The expected length of such a sequence is $\ln |\script D_i| +O(1)=O(\log \ply)$.
      \end {proof}

      Combining Observation~\ref {obs:k} and Lemma~\ref {lem:randomlog}, we see
      that Algorithm~\ref {alg:random} has a competitive ratio of $O (\log \ply)$.

    \subsection {Deterministic Tracking with Linear Competitive Ratio}
      We now describe a deterministic variant of Algorithm~\ref {alg:random}. The only thing we change is
 that, instead of selecting a random member of the set $\script C$ of candidate regions, we select an arbitrary element of this set.   Here we assume that $\script C$ is represented in some deterministic way that we make no further assumptions about. For example, if the elements in $\script D$ are unit disks we might store them as a sorted list by the $x$-coordinate of their center points. Algorithm~\ref {alg:determ}  shows the pseudocode for the changed procedure.
         
      \begin {algorithm} [ht]
        \caption {Deterministic online tracking algorithm.} \label {alg:determ}
        Procedure $\picksensor (\tau)$:
        \begin {compactenum}
          \item let $C$ be the first element in $\script C$
          \item append $(\tau, C)$ to $S$
        \end {compactenum}
      \end{algorithm}

      This strategy may seem rather na\"ive, and indeed produces a competitive ratio that is exponentially larger than that of the randomized strategy of the previous section. But we will see in Section~\ref {sec:lowerbounds} that this is unavoidable, even for the specific case of unit disks.
      
      \begin {lemma} \label {lem:determlin}
        Algorithm~\ref {alg:determ} produces a valid solution of length $O (k \ply)$.
      \end {lemma}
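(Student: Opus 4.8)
The plan is to mirror the proof of Lemma~\ref{lem:randomlog}, since Algorithm~\ref{alg:determ} differs from Algorithm~\ref{alg:random} only in the rule used by $\picksensor$ to pick an element of the candidate set $\script C$. In particular, the breakpoint times $\tau_0,\tau_1,\dots,\tau_k$ produced by the two algorithms coincide: they depend only on $\script D$ and $T$ and not on which member of $\script C$ is selected. So there are still exactly $k$ full intervals $[\tau_i,\tau_{i+1})$, plus one final partial interval after $\tau_k$, and by Observation~\ref{obs:k} and Theorem~\ref{thm:greedy} these match the breakpoints of an optimal tracking sequence.

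First I would dispatch validity. A region $C$ is appended to $S$ only when $C\in\script C$, and by construction $\script C$ at that moment consists of regions of $\script D_i$ that contain the entire traversed portion of $T$ from $p_i$ up to the current event; the region $C$ is then retained until the event at which $T$ first leaves $C$, whereupon $C$ is removed from $\script C$ and $\picksensor$ is called again. Hence every pair $(\tau,C)$ of $S$ satisfies both the covering and the maximality requirements of a tracking sequence, exactly as in the randomized case; the deterministic tie-breaking rule plays no role in this part of the argument.

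Next I would bound $|S|$ by counting removals from $\script C$ rather than $\picksensor$ calls directly. Fix an interval $[\tau_i,\tau_{i+1})$. At time $\tau_i$ the set $\script C$ is (re)initialized, by $\nextstep$ or by the global initialization, to $\script D_i$, which has $|\script D_i|\le\ply$ elements. During $(\tau_i,\tau_{i+1})$ the set $\script C$ never grows, because the only operation that enlarges it, $\nextstep$, is triggered precisely when $\script C$ becomes empty, and by construction that happens only at time $\tau_{i+1}$. Therefore at most $|\script D_i|\le\ply$ distinct regions are removed from $\script C$ during this interval. Since, by inspection of the pseudocode, every call to $\picksensor$ other than the single one at initialization is immediately preceded by the removal of a region from $\script C$, the number of pairs appended to $S$ while the event time lies in $[\tau_i,\tau_{i+1})$ is at most $\ply$. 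Summing over the $k$ full intervals and the final partial interval gives $|S|\le 1+(k+1)\ply=O(k\ply)$.

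The only real subtlety — and the reason I count removals instead of picks — is the control flow of Algorithm~\ref{alg:determ}: when $\script C$ empties and $\nextstep$ resets it, the region currently being tracked may be carried over into the next interval without a fresh call to $\picksensor$, so an interval need not begin with a pick, and one could otherwise worry about over- or under-counting such a region. Charging each pick to the removal that precedes it sidesteps this entirely, since removals are unambiguously localized to the interval in which the corresponding region is exited. Apart from this bookkeeping, the argument is the deterministic analogue of Lemma~\ref{lem:randomlog}, with the trivial worst-case bound $|\script D_i|$ on the length of an arbitrarily chosen increasing subsequence replacing the expected bound $\ln|\script D_i|+O(1)$ for a random one.
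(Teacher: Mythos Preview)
Your proof is correct and follows essentially the same approach as the paper: validity is inherited from Lemma~\ref{lem:randomlog}, and the length bound comes from the observation that each interval $[\tau_i,\tau_{i+1})$ can contribute at most $|\script D_i|\le\ply$ regions to $S$. The paper states this in two sentences without your removal-counting bookkeeping or the discussion of the carry-over subtlety, but the underlying argument is the same.
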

      
      \begin {proof}
        Since the change in the algorithm does not influence the validity of the solution, the correctness of the algorithm follows directly from Lemma~\ref {lem:randomlog}.
        The length of a solution is clearly no more than $k \ply$, since there are $k$ steps and at each step there are at most $\ply$ disks in $\script D_i$.
      \end {proof}

      As before, combining Observation~\ref {obs:k} and Lemma~\ref {lem:determlin}, we see that Algorithm~\ref {alg:determ} has a competitive ratio of $O (\ply)$.
       
 \subsection {Deterministic Tracking in One Dimension}
In the 1-dimensional case, a better deterministic algorithm is possible. In this case, the regions of $\script D$ can only be connected intervals, due to our assumptions that they are closed connected subsets of $\R$.
            
Now, when we want to pick a new sensor, we have to choose between $c=|\script C|$ intervals that all contain the current position of the entity. For each interval $C_i$, let $\ell_i$ be the number of intervals in $\script C\setminus\{C_i\}$ that contain the left endpoint of $C_i$, and let $r_i$ be the number of intervals in $\script C\setminus\{C_i\}$ that contain the left endpoint of $C_i$. We say that an interval $C_i$ is \emph{good} if $\max(\ell_i,r_i)\le c/2$. Our deterministic algorithm simply chooses a good sensor at each step. Figure~\ref {fig:interval-determ} illustrates this.
      
      \begin {algorithm} [ht]
        \caption {Deterministic online tracking algorithm for $d=1$.} \label {alg:determ1}
        Procedure $\picksensor (\tau)$:
        \begin {compactenum}
          \item let $L$ be the sequence of left end points of the intervals in $\script C$, sorted from left to right
          \item let $R$ be the sequence of right end points of the intervals in $\script C$, sorted from right to left
          \item for each $C \in \script C$, let $i_C$ be the highest index of $C$ in either $L$ or $R$
          \item let $C^*$ be the sensor that has the lowest $i_C$
          \item append $(\tau, C^*)$ to $S$
        \end {compactenum}
      \end{algorithm}      
      
      The new algorithm is described in Algorithm~\ref {alg:determ1}.
As with our deterministic algorithm in higher dimensions, the only change from  Algorithm~\ref {alg:random} is the implementation of the $\picksensor$ procedure.
      \eenplaatje[scale=0.8]{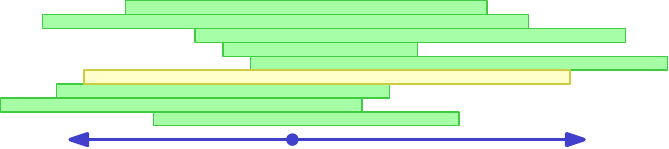} {A set of $8$ intervals covering the current location of the entity (blue dot). A good interval is highlighted; this interval has $\ell_i=3\le 8/2$ and $r_i=2\le 8/2$.}

      \begin {lemma} \label {lem:determlog}
        Algorithm~\ref {alg:determ1} produces a valid solution of length $O (k \log \ply)$.
      \end {lemma}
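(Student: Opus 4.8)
The plan is to follow the same skeleton as the proof of Lemma~\ref{lem:randomlog}, replacing the ``random increasing subsequence'' estimate by a deterministic halving argument. Validity is inherited for free: Algorithm~\ref{alg:determ1} changes only the $\picksensor$ procedure, and it still appends a region taken from the current candidate set $\script C$, which by construction covers the portion of the trajectory that must be covered; so, exactly as in the proof of Lemma~\ref{lem:determlin}, the output is a valid tracking sequence. For the length bound I would again split the execution into the $k$ phases delimited by $\tau_0<\tau_1<\dots<\tau_k$ --- these boundaries depend only on when the trajectory leaves every region of $\script D_i$, so they are unchanged from Algorithm~\ref{alg:random} --- and it suffices to prove that within a single phase $\picksensor$ is called $O(\log\ply)$ times; summing over the $k$ phases then gives $O(k\log\ply)$.

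The core step I would carry out first is the following halving claim: \emph{if $\picksensor$ is invoked with candidate set $\script C$ of size $c$, the next invocation within the same phase has a candidate set of size at most $c/2$}. Let $C=[L_C,R_C]$ be the interval appended by the current invocation. I will argue (below) that the interval chosen by Algorithm~\ref{alg:determ1} is always \emph{good}, i.e. $\max(\ell_C,r_C)\le c/2$, where $\ell_C$ and $r_C$ count the intervals of $\script C\setminus\{C\}$ containing the left, respectively right, endpoint of $C$. The next invocation of $\picksensor$ occurs exactly when the trajectory leaves $C$; since the trajectory is continuous and $C$ is an interval of $\R$, at that instant it crosses one of the two endpoints of $C$ --- say $R_C$, the case of $L_C$ being symmetric. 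The new candidate set consists of the regions of $\script C\setminus\{C\}$ that still contain the whole sub-trajectory from $p_i$ to this crossing point; each such region is an interval containing $p_i$ (which lies in $C$, so $p_i\le R_C$) and the crossing point (which is $R_C$, or a point just beyond it), hence by connectedness it contains $R_C$. Therefore the new candidate set is contained in $\{D\in\script C\setminus\{C\}:R_C\in D\}$, which has size $r_C\le c/2$. Since the first invocation of a phase has $|\script C|=|\script D_i|\le\ply$, the candidate set at the $j$-th invocation has size at most $\ply/2^{\,j-1}$, so a phase contains at most $\lfloor\log_2\ply\rfloor+1=O(\log\ply)$ invocations.

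It remains to verify that a good interval always exists and that the rule of Algorithm~\ref{alg:determ1} selects one. Because every interval of $\script C$ contains the current point $p$, for any $C_i=[L_i,R_i]$ and any $j$ we have $R_j\ge p\ge L_i$, so $C_j$ contains $L_i$ precisely when $L_j\le L_i$; hence $\ell_i+1$ is the position of $L_i$ in the left-to-right order of the left endpoints, and symmetrically $r_i+1$ is the position of $R_i$ in the right-to-left order of the right endpoints. Assuming general position (ties broken consistently, as the sorted lists $L$ and $R$ in Algorithm~\ref{alg:determ1} do), the values $\{\ell_i\}$ and $\{r_i\}$ are each a permutation of $\{0,\dots,c-1\}$, so at most $\lceil c/2\rceil-1$ intervals have $\ell_i>c/2$ and at most $\lceil c/2\rceil-1$ have $r_i>c/2$; since $2(\lceil c/2\rceil-1)<c$, at least one interval is good. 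Moreover $i_C=\max(\ell_C,r_C)+1$, so the minimizer $C^*$ of $i_C$ also minimizes $\max(\ell_C,r_C)$, which by the count just given is at most $c/2$; thus $C^*$ is good.

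I expect the halving claim --- specifically the point where one-dimensionality forces every surviving candidate to contain the endpoint the trajectory has just crossed --- to be the only genuinely non-routine ingredient; the phase decomposition, the inheritance of validity from Algorithm~\ref{alg:random}, and the final summation are immediate, and combining the resulting length bound with Observation~\ref{obs:k} then also yields the $O(\log\ply)$ competitive ratio.
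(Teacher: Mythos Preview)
Your proposal is correct and follows essentially the same approach as the paper: validity inherited from Algorithm~\ref{alg:random}, a phase decomposition into the intervals $[\tau_i,\tau_{i+1}]$, and a halving argument showing $|\script C|$ drops by a factor of two at each handover because the chosen interval is good and the trajectory exits through one of its endpoints. You supply more detail than the paper does---in particular, you explicitly verify that $i_C=\max(\ell_C,r_C)+1$ so that the algorithm's minimizer is indeed good, and you spell out why every surviving candidate must contain the crossed endpoint---but the skeleton and the key idea are identical.
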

      
      \begin {proof}
       There always exists a good interval, by the pigeonhole principle, because there are at most $(c-1)/2$ intervals that are not good due to $\ell_i$ being too high and at most $(c-1)/2$ intervals that are not good due to $r_i$ being too high. Therefore the algorithm always succeeds in finding a good interval to choose.
        As in the previous section, the change in the algorithm does not influence the validity of the solution, so the correctness of the algorithm follows directly from Lemma~\ref {lem:randomlog}.

        Each time Algorithm~\ref{alg:determ1} performs a handover, it must be the case the trajectory has just crossed either the left endpoint or the right endpoint of the interval it most recently selected. Therefore, within the time interval from $\tau_i$ to $\tau_{i+1}$, the number of intervals in $\script C$ goes down by at least a factor of two at each handover, and it begins as at most $\ply$. Therefore, the number of handovers within this interval is at most $\log_2\ply$ and the total cost of the solution is at most $k\log_2\ply$.
      \end {proof}            
      
      Combining Observation~\ref {obs:k} and Lemma~\ref {lem:determlog}, we conclude that Algorithm~\ref {alg:determ1} also has a competitive ratio of $O (\log \ply)$.

    \subsection {Summary of Algorithms} \label {sec:runtime}
Our input assumptions ensure that any trajectory can be transformed in polynomial time into a sequence of events: trivially, for each piece in the piecewise description of the trajectory, we can determine the events involving that piece in time $O(n)$ (where $n=|\script D|$) and sort them in time $O(n\log n)$.

Once this sequence is known, it is straightforward to maintain both the set of regions containing the current endpoint of the trajectory, and the set $\script C$ of candidate regions, in constant time per event. Additionally, each event may cause our algorithms to select a new region, which may in each case be performed given the set $\script C$ in time $O(|\script C|)=O(\ply)$. Therefore, if there are $m$ events in the sequence, the running time of our algorithms (once the event sequence is known) is at most $O(m\ply)$.

Additionally, geometric data structures (such as those for point location among fat objects~\cite {os-rsplfo-96}) may be of use in more quickly  finding the sequence of events, or for more quickly selecting a region from $\script C$; we have not carefully analyzed these possibilities, as our focus is primarily on the competitive ratio of our algorithms rather than on their running times.

We summarize these results in the following theorem:

      \begin {theorem}
        Given a set $\script D$ of $n$ connected regions in $\R^d$, and a trajectory $T$,
        \begin {itemize}
          \item there is a randomized strategy for the online tracking problem that achieves a competitive ratio of $O (\log \ply)$; and
          \item there are deterministic strategies for  the online tracking problem that achieve a competitive ratio of $O (\log \ply)$ when $d = 1$ or $O (\ply)$ when $d > 1$.
        \end {itemize}
        Each of these strategies may be implemented in polynomial time.
      \end {theorem}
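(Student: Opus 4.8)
The plan is to assemble this theorem directly from the observation and lemmas already established, since all three algorithms share the same analysis skeleton. First I would handle the competitive-ratio claims. Recall that the construction at the start of Section~\ref{sec:dynamic} associates to every trajectory $T$ a sequence $\tau_0,\tau_1,\dots,\tau_k$, and that $k=k(T)$ is the quantity appearing in Observation~\ref{obs:k} and in each of Lemmas~\ref{lem:randomlog}, \ref{lem:determlin}, and~\ref{lem:determlog}. Observation~\ref{obs:k} gives the lower bound $|S^*(T)|\ge k$, while Lemma~\ref{lem:randomlog} gives that Algorithm~\ref{alg:random} produces a valid tracking sequence of expected cost $O(k\log\ply)$. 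Hence
\[
\sup_T\frac{E[\,|S_A(T)|\,]}{|S^*(T)|}\;\le\;\sup_T\frac{O(k\log\ply)}{k}\;=\;O(\log\ply),
\]
which is the randomized bound. The deterministic bounds follow identically: for $d>1$, Lemma~\ref{lem:determlin} gives cost $O(k\ply)$ for Algorithm~\ref{alg:determ}, yielding ratio $O(\ply)$; for $d=1$, Lemma~\ref{lem:determlog} gives cost $O(k\log\ply)$ for Algorithm~\ref{alg:determ1}, yielding ratio $O(\log\ply)$. The degenerate case $k=0$ (a single region covers all of $T$) can be dispatched by noting that in that case every algorithm outputs a one-pair sequence, matching the optimum, so the ratio is $1$ and the stated bounds hold.

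Then I would verify the polynomial running-time claim, following the discussion in Section~\ref{sec:runtime}. By the input assumptions the arrangement of the region boundaries has complexity $O(n^d)$ and each of the $m$ pieces of $T$ crosses each region boundary $O(1)$ times, so the total number of region-enter/region-leave events is polynomial; for each trajectory piece these events are found in $O(n)$ time and sorted in $O(n\log n)$ time, so the whole event sequence is produced in polynomial time. Given the sequence, each of the three algorithms maintains the candidate set $\script C$ (and the set of regions containing the current point) with $O(1)$ work per event, and the $\picksensor$ step inspects $\script C$ in $O(|\script C|)=O(\ply)$ time; summing over the polynomially many events gives a polynomial total, so each strategy runs in polynomial time.

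There is no real obstacle here — the theorem is a packaging of results already in hand — so the proof is short; the only point that needs a word of care is the consistency of notation, namely that the $k$ appearing in Observation~\ref{obs:k} and in Lemmas~\ref{lem:randomlog}, \ref{lem:determlin}, \ref{lem:determlog} is one and the same function of $T$, which is precisely what makes the $k$'s cancel in the competitive ratio.
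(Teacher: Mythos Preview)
Your proposal is correct and follows essentially the same approach as the paper: the theorem is presented there as a summary, with the competitive ratios obtained by combining Observation~\ref{obs:k} with each of Lemmas~\ref{lem:randomlog}, \ref{lem:determlin}, and~\ref{lem:determlog} (exactly as the paper remarks after each lemma), and the polynomial-time claim justified by the event-sequence discussion in Section~\ref{sec:runtime} that you reproduce. Your added care about the degenerate $k=0$ case and the identity of the $k$'s is fine but not something the paper bothers to spell out.
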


\section {Lower Bounds} \label {sec:lowerbounds}
  
We now provide several lower bounds on the best competitive ratio that any deterministic or randomized algorithm can hope to achieve. Our lower bounds use only very simple regions in $\script D$: similar rhombi, in one case, unit disks in $\R^d$ in a second case, and unit intervals in $\R$ in the third case. These bounds show that our algorithms are optimal, even with strong additional assumptions about the shapes of the regions.

\subsection{Lower Bounds on Stateless Algorithms}
\eenplaatje[scale=0.75]{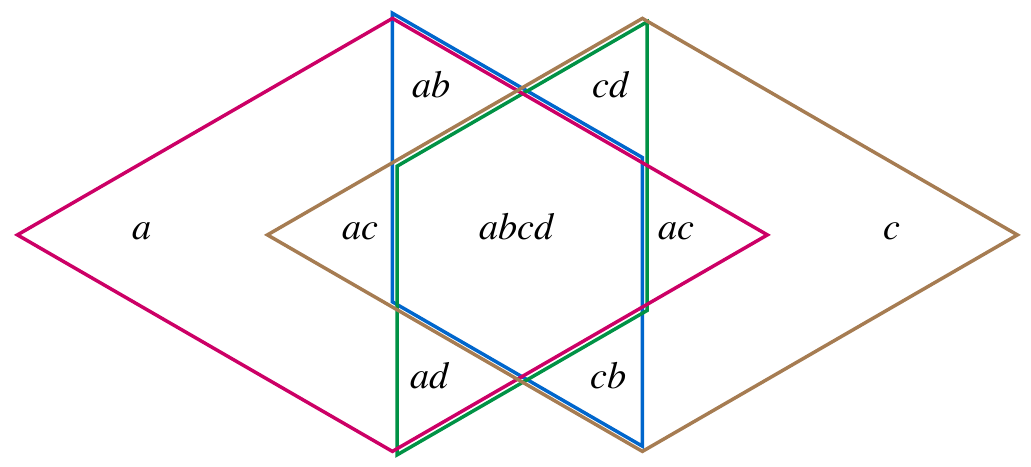} {Four similar rhombi form a set of regions for which no stateless algorithm can be competitive.}

An algorithm is \emph{stateless} if the next sensor that covers the moving point, when it moves out of range of its current sensor, is a function only of its location and not of its previous state or its history of motion. Because they do not need to store and retrieve as much information, stateless algorithms provide a very enticing possibility for the solution of the online tracking problem, but as we show in this section, they cannot provide a competitive solution.

\begin{theorem}  \label {thm:lowerbound_stateless}
There exists a set $\script D$ of four similar rhombi in $\R^2$, such that any stateless algorithm for the online tracking problem has unbounded competitive ratio.
\end{theorem}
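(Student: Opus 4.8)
The plan is to pin down one explicit arrangement of four similar rhombi and then show that, against \emph{any} stateless rule, the adversary can push the competitive ratio above any prescribed bound $M$. A stateless algorithm is nothing more than a function $f$ assigning to each covered point $p$ a rhombus $f(p)\in\script D$ with $p\in f(p)$, so it suffices to produce, for each such $f$ and each $M$, a trajectory $T$ with $|S^*(T)|=O(1)$ but $|S_f(T)|\ge M$; letting $M\to\infty$ then gives the claim.

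\emph{The gadget.} I would arrange $A,B,C,D$ in a ``pinwheel'' so that all four share a common full-dimensional core and, inside each rhombus $R$, the traces of two of the other three rhombi $X_R,Y_R$ overlap in a full-dimensional lens $X_R\cap Y_R\cap R$ while each of $X_R,Y_R$ also has a full-dimensional ``private'' part lying in $R$, outside the other, and outside the fourth rhombus. Four similar rhombi have just enough freedom (translation, rotation, scale) to realize these incidences, which I would certify with explicit coordinates and the constant-complexity arrangement of the twelve rhombus edges. The point of the arrangement is that inside $R$ there is a closed curve $\gamma_R$ that shuttles between the private part of $X_R$ and the private part of $Y_R$, passing through the lens each time. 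Any tracking algorithm whose held region along $\gamma_R$ is never $R$ must hand over each time the moving point exits $X_R$ or $Y_R$, hence at least twice per traversal of $\gamma_R$; but the optimum can track the point through arbitrarily many traversals of $\gamma_R$ using the single region $R$, at cost $1$.

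\emph{Finding a live gadget and using it.} The curve $\gamma_R$ fails to embarrass $f$ only if $f$ forces the algorithm onto $R$ somewhere along it, and by construction this can happen only if $f\equiv R$ on one of two full-dimensional ``trap'' regions $P_R,P'_R\subseteq R$, namely the loci of candidate exit points of the two phases of the shuttle, taken outside the fourth rhombus so that $f$ there has only the two options $R$ and $X_R$ (resp.\ $R$ and $Y_R$). I would tune the pinwheel so that the four sets $P_A\cup P'_A,\dots,P_D\cup P'_D$ pairwise intersect; then, since $f$ at a common point can equal at most one of the four rhombi, $f$ can kill at most one of the four gadgets, so at least one $\gamma_R$ survives. (If a symmetric realization resists this clean pigeonhole, the same conclusion follows from a finite case analysis, since the behavior of $f$ relevant here is determined by its values on the finitely many cells of the rhombus arrangement.) With a surviving $\gamma_R$ fixed, the trajectory is immediate: start the point at a trap point of $\gamma_R$ where $f\ne R$, traverse $\gamma_R$ exactly $M$ times along paths that stay inside $R$ and inside the currently held region between consecutive forced exits, then stop. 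The whole trajectory lies in $R$, so $|S^*(T)|=1$, while the held region alternates between $X_R$ and $Y_R$ (never $R$) and is lost at least twice per loop, so $|S_f(T)|\ge 2M$. Hence $f$ has unbounded competitive ratio, and $f$ was arbitrary.

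The main obstacle is squarely the middle step: exhibiting an arrangement of \emph{four similar rhombi} for which the four trap regions genuinely overlap pairwise — equivalently, for which no stateless rule can neutralize all four bounce loops simultaneously. Once such an arrangement is in hand, the geometry of the shuttle and the handover count along $\gamma_R$ are routine; essentially all the work is in tuning the rhombi, which is also why the statement is phrased for one particular $\script D$ rather than for every low-ply family.
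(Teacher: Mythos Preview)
Your plan has the right shape --- build a ``bounce'' inside some rhombus $R$ between two neighbours $X_R,Y_R$ so that every forced handover lands on $X_R$ or $Y_R$ and never on $R$ --- but the step that selects a live $R$ does not go through, and the missing idea is exactly what the paper supplies.

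First, the pigeonhole you invoke is broken.  The trap sets $P_R,P'_R$ lie, by your own description, in cells covered by exactly the two rhombi $\{R,X_R\}$ (resp.\ $\{R,Y_R\}$); distinct two-rhombus cells of the arrangement are disjoint, so these trap sets do \emph{not} pairwise intersect.  Even if they did, ``$f$ at a common point equals at most one rhombus'' would not imply that $f$ kills at most one gadget: killing gadget $R$ only requires $f\equiv R$ on one whole trap set, and nothing prevents $f$ from doing this for all four $R$ simultaneously at four different places.  More concretely, a four-fold-symmetric pinwheel typically has only four two-rhombus cells touching the core, one between each pair of consecutive rhombi; the stateless rule that, in each such cell, always picks the clockwise-next rhombus gives every vertex outdegree~$1$ and defeats all four of your bounce loops at once.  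So the fallback ``finite case analysis'' cannot save the symmetric construction --- the obstacle is not bookkeeping but the arrangement itself.

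The paper's fix is a counting observation you are one step away from.  It places the four rhombi so that the common core $abcd$ is adjacent to \emph{five} two-rhombus cells, say $ab,ac,ad,bc,cd$.  Encode the stateless rule as an orientation of the graph with the rhombi as vertices and those five cells as edges (orient $xy$ toward whichever of $x,y$ the rule selects when the point crosses from $abcd$ into $xy$).  Five edges on four vertices force some vertex $x$ to have outdegree $\ge 2$, say along edges to $y$ and $z$; then the trajectory that oscillates $abcd\to xy\to abcd\to xz\to abcd\to\cdots$ stays inside $x$ throughout (optimal cost $1$) while the stateless rule alternates between $y$ and $z$, paying at least once per oscillation.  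The whole argument is the single inequality $5>4$; your proposal implicitly works with four edges, which is why the pigeonhole collapses.
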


\begin{proof}
The set $\script D$ is shown in Figure~\ref{fig:rhombi}. It consists of four rhombi $a$, $b$, $c$, and $d$; these rhombi partition the plane into regions (labeled in the figure by the rhombi containing them) such that the common intersection $abcd$ of the rhombi is directly adjacent to regions labeled $ab$, $ac$, $ad$, $bc$, and $cd$.

Let $G$ be a graph that has the four rhombi as its vertices, and the five pairs $ab$, $ac$, $ad$, $bc$, and $cd$ as its edges. Let $A$ be a stateless algorithm for $\script D$, and orient the edge $xy$ of $G$ from $x$ to $y$ if it is possible for algorithm $A$ to choose region $y$ when it performs a handover for a trajectory that moves from region $abcd$ to region $xy$. If different trajectories would cause $A$ to choose either $x$ or $y$, orient edge $xy$ arbitrarily.

Because $G$ has four vertices and five edges, by the pigeonhole principle there must be some vertex $x$ with two outward-oriented edges $xy$ and $xz$. There exists a trajectory $T$ that repeatedly passes from region $abcd$ to $xy$, back to $abcd$, to $xz$, and back to $abcd$, such that on each repetition algorithm $A$ performs two handovers, from $z$ to $y$ and back to $z$. However, the optimal strategy for trajectory $T$ is to cover the entire trajectory with region $x$, performing no handovers. Therefore, algorithm $A$ has unbounded competitive ratio.
\end{proof}

\subsection{Lower Bounds on Deterministic Algorithms}
\tweeplaatjes[scale=0.75]{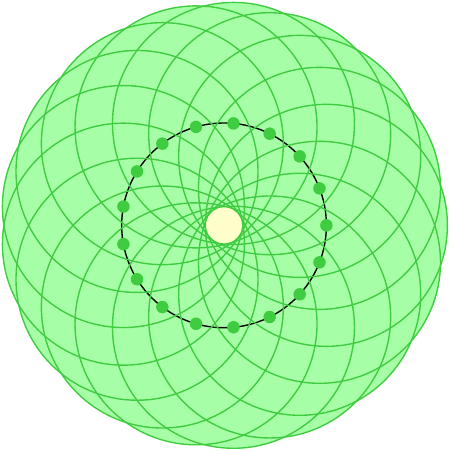} {flower-heart} {(a) A set of $\ply$ disks whose centers are equally spaced on a circle. (b) The heart of the construction, zoomed in. The yellow cell is inside all disks; the red cells are inside all but one disk.}

Next, we show that any deterministic algorithm in two or more dimensions must have a competitive ratio of $\ply$ or larger, matching our deterministic upper bound and exponentially worse than our randomized upper bound.  The lower bound construction consists of a set of $\ply$ unit disks with their centers on a circle, all containing a common point (Figure~\ref {fig:flower+flower-heart}). The idea is that if the trajectory starts at this common point, it can exit from any single disk, in particular, the one that a deterministic algorithm previously chose.

\begin {theorem} \label {thm:lowerbound_determ}
There exists a set $\script D$ of unit disks in $\R^2$, such that any deterministic algorithm for the online tracking problem has competitive ratio at least $\ply-1$.
\end {theorem}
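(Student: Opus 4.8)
The plan is to make the intuition in the caption of Figure~\ref{fig:flower+flower-heart} precise via an adaptive adversary against an arbitrary deterministic algorithm. First I would fix the set $\script D$: take $\ply$ unit disks $D_1,\dots,D_\ply$ whose centers are the vertices of a regular $\ply$-gon inscribed in a circle of radius $\rho$, with $\rho<1$ chosen close enough to $1$ (perturbing slightly if necessary to avoid degeneracies). The common center of the polygon lies at distance $\rho<1$ from every disk center, so it lies in every $D_i$; hence the common intersection $Y=\bigcap_i D_i$ is a full-dimensional cell, and the ply of $\script D$ equals $\ply$, attained on $Y$. For $\rho$ near $1$ the cell $Y$ is, up to lower-order terms, a regular $\ply$-gon each of whose sides is the boundary arc of a distinct disk, so crossing the side contributed by $D_i$ transversally leaves $D_i$ and no other disk. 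Consequently $Y$ is bordered by $\ply$ cells $R_1,\dots,R_\ply$ with $R_i\subseteq\bigl(\bigcap_{j\neq i}D_j\bigr)\setminus D_i$. Verifying this cell structure --- that every disk contributes a side of $\partial Y$ and that no three bounding circles meet near $Y$ --- is the one genuinely geometric step and the place where I expect to have to be most careful, though the high symmetry of the construction keeps it routine.

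Now let $A$ be any deterministic online algorithm. I would build the trajectory $T$ adaptively, as a polyline, in $\ply-1$ rounds. The point starts inside $Y$; we first feed $A$ a short piece lying in $Y$, and since $Y$ lies in every disk, $A$ responds by committing to some disk $D^{(1)}$, which we observe. In round $j$ (for $j=1,\dots,\ply-1$), $A$ is tracking some disk $D^{(j)}$; I extend $T$ by a straight segment from its current position in $Y$ to a point in the cell $R_{i_j}$ with $D_{i_j}=D^{(j)}$, crossing only $\partial D^{(j)}$. By convexity this segment leaves $D^{(j)}$ exactly once and stays inside every other disk, so the maximality requirement on tracking sequences forces $A$'s interval for $D^{(j)}$ to end there and forces a handover to a region that contains the current point; since that region cannot be $D^{(j)}$, it is some $D^{(j+1)}\neq D^{(j)}$. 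I then retract the point back into $Y$ along the same segment; as $D^{(j+1)}$ is convex and contains both endpoints, no further handover is forced. Each segment is algebraic and crosses each disk boundary $O(1)$ times, so $T$ is a legal online input, and after $\ply-1$ rounds $A$ has been forced into at least $\ply-1$ handovers, so $|S_A(T)|\geq\ply$.

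Finally I would bound the optimum. The trajectory $T$ visits only $Y$, the cells $R_{i_1},\dots,R_{i_{\ply-1}}$, and the segments joining them to $Y$. A single disk $D^*$ covers all of this precisely when $D^*\notin\{D^{(1)},\dots,D^{(\ply-1)}\}$: each $R_{i_j}$ lies in every disk except $D^{(j)}$, the cell $Y$ lies in every disk, and each connecting segment lies in every disk that contains its two endpoints, by convexity. Since at most $\ply-1$ disks are excluded, such a $D^*$ exists, and the one-pair sequence $\bigl((0,D^*)\bigr)$ is a valid tracking sequence, so $|S^*(T)|=1$. Hence $|S_A(T)|/|S^*(T)|\geq\ply$, which is at least $\ply-1$; and since $A$ was an arbitrary deterministic algorithm, every deterministic algorithm for the online tracking problem on this $\script D$ has competitive ratio at least $\ply-1$.
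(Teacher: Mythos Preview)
Your proof is correct and follows the same approach as the paper: the same flower construction of $\ply$ unit disks with centers on a circle, and the same adaptive adversary that, at each step, pushes the point into the cell missing exactly the disk the algorithm is currently using. The only difference is cosmetic---the paper lets the adversary run for arbitrarily many rounds and bounds the optimum by $\lceil |T|/(\ply-1)\rceil$, whereas you stop after exactly $\ply-1$ rounds so that one never-used disk covers the whole trajectory, which in fact gives the slightly sharper ratio $\ply$ rather than $\ply-1$.
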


\begin {proof}
Let $\script D$ be a set of $\ply$ unit disks whose centers are equally spaced on a given circle $C$ of slightly less than unit radius, as in Figure~\ref {fig:flower}. Let the moving point to be tracked start at the point $p_0$ at the center of $C$, in the common interior of all disks. For each disk $D_i \in \script D$, there exists a cell $X_i$ in the arrangement that is interior to all disks in $\script D \setminus\{D_i\}$, but outside $D_i$ itself. Furthermore, this cell is directly adjacent to the center cell. See Figure~\ref {fig:flower-heart} for an illustration.

Now, let $A$ be any deterministic algorithm for the online tracking problem, and construct a sequence of updates to trajectory $T$ as follows. Initially, $T$ consists only of the single point $p_0$. At each step, let algorithm $A$ update its tracking sequence to cover the current trajectory, let $D_i$ be the final region in the tracking sequence constructed by algorithm $A$, and then update the trajectory to include a path to $X_i$ and back to the center.

Since $X_i$ is not covered by $D_i$, algorithm $A$ must increase the cost of its tracking sequence by at least one after every update. That is, $|S_A(T)|\ge|T|$. However, in the optimal tracking sequence, every $\ply-1$ consecutive updates can be covered by a single region $D_i$, so $S^*(T)\le|T|/(\ply-1)$. Therefore, the competitive ratio of $A$ is at least $\ply-1$.
\end {proof}
      
This construction generalizes to any $d > 2$.

\subsection{Lower Bounds on Randomized Algorithms} \label {sec:lowerbounds_random}
The above lower bound construction uses the fact that the algorithm to solve the problem is deterministic: an adversary constructs a tracking sequence by reacting to each decision made by the algorithm. For a randomized algorithm, this is not allowed. Instead, the adversary must
select an entire input sequence, knowing the algorithm but not knowing the random choices to be made by the algorithm. Once this selection is made, we compare the quality of the solution produced by the randomized algorithm to the optimal solution.
By Yao's principle~\cite{cllr-bbcr-97,Yao-FOCS-77}, 
finding a randomized lower bound in this model is equivalent to finding a random distribution $R$ on the set of possible update sequences such that, for every possible deterministic algorithm $A$, the expected value of the competitive ratio of $A$ on a sequence from $R$ is high.

\eenplaatje[scale=0.8]{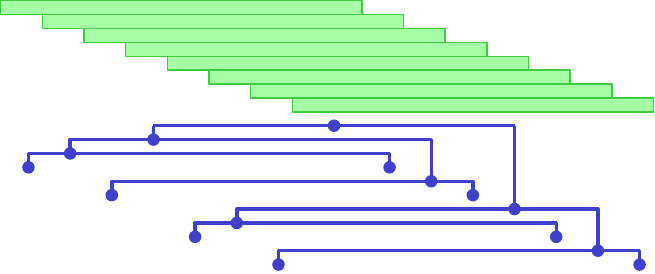} {A set of $\ply = 8$ intervals, and a tree of $8$ different trajectories in $\R^1$ (horizontal dimension).}

Our lower bound construction consists of $\ply$ unit intervals that contain a common point, and a tree of $\ply$ different possible paths for the moving object to take, each of which leaves the intervals in a different ordering, in a binary tree-like fashion. Half of the trajectories start by going to the left until they are outside the right half of the intervals, the others start towards the right until they are outside the left half of the intervals, and this recurses, as shown in Figure~\ref {fig:interval-tree}.

More formally, let us assume for simplicity that $\ply$ is a power of $2$.
Let $\script D$ be a set of $\ply$ distinct unit intervals in $\R$, containing a common point $p_0$.
For any $k \in [1, \ply]$ we define point $p_k$ to be a point outside the leftmost $k$ intervals but in the interior of the rest, and $p_{-k}$ to be a point outside the rightmost $k$ intervals but in the interior of the rest.

Now, for each $j \in [1, \ply]$, we construct a trajectory $T_j$ with $h = \log \ply$ steps, as follows. We define an index $\xi(j,i)$ for all $j \in [1, \ply]$ and all $i \in [1, h]$ such that trajectory $T_j$ is at point $p_{\xi(j,i)}$ at step $i$. At step $0$, all trajectories start at $\xi(j,0) = 0$.  Then, at step $i$:
      \begin {itemize}
        \item all $T_j$ with $j \mod 2^{h-i} \leq 2^{h-i-1}$ move to the left to $\xi(j,i) = \min_{l < i} \xi(j,l) - 2^{h-i}$,
        \item all $T_j$ with $j \mod 2^{h-i} > 2^{h-i-1}$ move to the right to $\xi(j,i) = \max_{l < i} \xi(j,l) + 2^{h-i}$.
      \end {itemize}
Figure~\ref {fig:interval-tree} shows $\script T$ be the resulting set of these $\ply$ trajectories in a tree representation.

\begin {theorem} \label {thm:lowerbound_random}
There exists a set $\script D$ of unit intervals in $\R$, for which any randomized algorithm to solve the online tracking problem has competitive ratio $\Omega (\log \ply)$.
\end {theorem}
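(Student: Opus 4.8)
The plan is to invoke Yao's principle with the random distribution $R$ that chooses an index $j$ uniformly at random from $[1,\ply]$ and then feeds the algorithm the updates defining the trajectory $T_j$, one step at a time. Since Yao's principle reduces the desired randomized lower bound to exhibiting such a distribution on which \emph{every} deterministic online algorithm has large expected competitive ratio, and since we will see that $|S^*(T_j)| = 1$ for every $j$, it suffices to prove that for every deterministic algorithm $A$ we have $E_j\bigl[\,|S_A(T_j)|\,\bigr] = \Omega(\log\ply)$.

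First I would record the combinatorial structure of the construction. Ordering the $\ply$ unit intervals from left to right, a short geometric check shows that moving left out of $p_0$ exits the intervals from rightmost to leftmost, moving right exits them from leftmost to rightmost, and $p_k$ (resp. $p_{-k}$) lies outside exactly the $k$ leftmost (resp. rightmost) intervals. Consequently, after step $i$ of $T_j$ the set of intervals still containing the entire trajectory so far --- the \emph{alive set} at the corresponding tree node --- has size exactly $\ply/2^{i}$, and it is obtained from the alive set at its parent by deleting one of its two halves: the right half if $T_j$ moved left at that step, the left half if it moved right. Hence after all $h=\log_2\ply$ steps exactly one interval $D^*$ remains alive; since $D^*$ is connected, contains $p_0$, and was never exited, it contains the whole range swept by $T_j$, so $T_j\subseteq D^*$ and $|S^*(T_j)|=1$. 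Moreover, among the $\ply/2^{i-1}$ leaves below any node at depth $i-1$, exactly half branch left and half branch right at step $i$; thus, conditioned on the first $i-1$ steps, the direction a uniformly random $T_j$ takes at step $i$ is an unbiased coin flip.

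Now fix a deterministic algorithm $A$ and run it against $T_j$ for a uniformly random $j$. Let $B_i$ be the interval $A$ is committed to just before step $i$ (the last region in the tracking sequence $A$ has produced so far); for $A$'s output to be valid, $B_i$ must be alive at the depth-$(i-1)$ node currently occupied, and $B_i$ is a deterministic function of the directions of the first $i-1$ steps. That node's alive set is partitioned into a left and a right half, and $B_i$ lies in exactly one of them, so exactly one of the two possible step-$i$ directions causes $T_j$ to exit $B_i$ and force $A$ to append a new pair. Letting $H_i$ be the indicator of this event, the branching probabilities above give $E[H_i \mid \text{first }i-1\text{ steps}] = \tfrac12$, hence $E[H_i]=\tfrac12$ for each $i\in[1,h]$. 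A distinct pair is appended for each $i$ with $H_i=1$ and no pair is ever removed, so $|S_A(T_j)|\ge 1+\sum_{i=1}^{h}H_i$, whence $E_j\bigl[\,|S_A(T_j)|\,\bigr]\ge 1+\tfrac h2 = 1+\tfrac12\log_2\ply$. Since $|S^*(T_j)|=1$, the expected competitive ratio of $A$ under $R$ is $E_j\!\left[|S_A(T_j)|/|S^*(T_j)|\right]\ge 1+\tfrac12\log_2\ply$, and Yao's principle yields that every randomized algorithm has competitive ratio $\Omega(\log\ply)$.

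The routine-but-essential part is the geometric bookkeeping confirming that the tree of trajectories splits the alive set \emph{perfectly} in half at every node --- this underlies both the exact value $1/2$ for the branching probabilities and the equality $|S^*(T_j)|=1$, and is where one must be careful about off-by-one effects and about the fact that each move extends the swept range on only one side. The conceptual crux, and the step most prone to an invalid ``the adversary reacts to $A$'' shortcut, is the conditioning argument: because $A$ is deterministic, its committed interval $B_i$ is determined once the past directions are fixed, and then --- crucially, since under $R$ the next direction is an independent fair coin --- $B_i$ survives with probability exactly $1/2$; this is precisely where the randomized (oblivious-adversary) model is used.
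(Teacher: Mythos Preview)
Your proof is correct and follows essentially the same approach as the paper: both use Yao's principle with the uniform distribution over the $\ply$ trajectories in the binary tree, argue that the optimal cost on every $T_j$ is $1$, and show that at each of the $\log_2\ply$ levels a deterministic algorithm's committed interval is forced to change with probability $\tfrac12$, yielding expected cost $1+\tfrac12\log_2\ply$. Your write-up is more explicit than the paper's about the halving of the alive set and about the conditioning that makes the step-$i$ direction an independent fair coin, but the argument is the same.
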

      
\begin {proof}
Let $\script D$ and the set of trajectories $\script T$ be as described above. Let $R$ be a probability distribution over the set of all possible trajectories that has a probability of $1 / \ply$ to be any element of $\script T$, and a probability of $0$ elsewhere.
        
Now, let $A$ be any deterministic algorithm for the online tracking problem.  At each level of the tree, each region $D_i$ that algorithm $A$ might have selected as the final region in its tracking sequence fails to cover one of the two points that the moving point could move to next, and each of these points is selected with probability $1/2$, so algorithm $A$ must extend its tracking sequence with probability $1/2$, and its expected cost on that level is $1/2$. The number of levels is $\log_2\ply$, so the total expected cost of algorithm $A$ is $1+\frac12\log_2\ply$, whereas the optimal cost on the same trajectory is $1$. Therefore the competitive ratio of algorithm $A$ on a random trajectory with distribution $R$ is at least $1+\frac12\log_2\ply$.

It follows by Yao's principle that the same value $1+\frac12\log_2\ply$ is also a lower bound on the competitive ratio of any randomized online tracking algorithm.
\end {proof}

Although the trajectories formed in this proof are short relative to the size of $\script D$, this is not an essential feature of the proof: by concatenating multiple trajectories drawn from the same distribution, we can find a random distribution on arbitrarily long trajectories leading to the same $1+\frac12\log_2\ply$ lower bound.
This construction generalizes to unit balls in any dimension $d > 1$ as well.

\section{Trilateration}

  We now extend our online algorithms to the scenario where the moving entity needs to be covered by $c$ different sensors at all times. Obviously, this means we need to strengthen our assumptions on $\script D$ and $T$ slightly: every point of $T[0,\infty)$ should be inside at least $c$ regions of $\script D$ for this to be possible. 
  
  We analyze the online version of the problem, and provide the competitive ratio as a function of $\ply$ and $c$. As $c$ tends to $\ply$, we may expect the competitive ratio to improve, since in the extreme case we simply need to use all available sensors and have no choice in the matter. Indeed, we show that a randomized algorithm exists that has competitive ratio $O (\log (\ply - c))$, and that again no better ratio is possible in this case.
  
  \subsection {Randomized Algorithm}
    \maarten {Ok, after thinking some more about it, it seems the algorithm extension is quite straightforward after all, it's just the analysis that gets a bit more complicated.}
    The randomized algorithm with expected competitive ratio of $\log \ply$ for the simple version of the online tracking problem can be extended to the case in which we want to track the entity with $c$ sensors.
    The reason is that the greedy algorithm still works for the offline problem, as we proved in Theorem~\ref {thm:greedy}.
    Interestingly, the competitive ratio gets better as $c$ increases:
    the algorithm can be adapted to achieve a competitive ratio of $O (\log (\ply-c))$, so when the required coverage is close to the ply this gives a constant competitive ratio. The description gets slightly more complicated. We will now prove this theorem.

    \eenplaatje[width=4.5in]{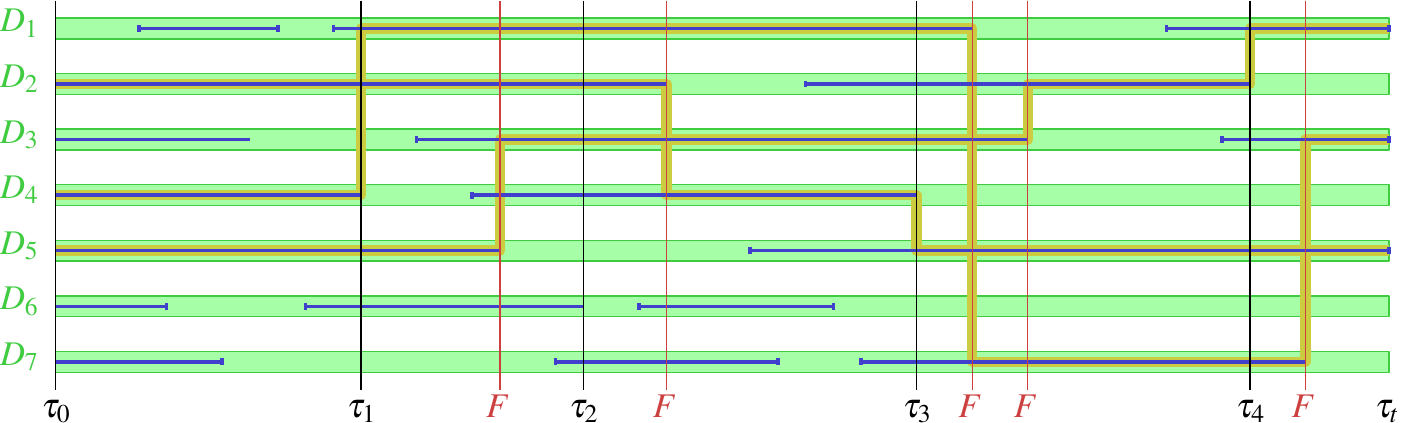} {Illustration of the two type of event times when covering the entity with $c=3$ sensors at the same time. The yellow paths indicate the optimal greedy solution. The times marked with $\tau_i$ are the times at which we start a new step in the algorithm. The times marked with $F$ are times when a fixed interval ends. Note that at time $\tau_2$ none of the sensors in the greedy solution changes its interval.}

\begin {theorem} \label {thm:trilateration-algorithm}
  There exists a randomized algorithm that solves the trilateration problem with a competitive ratio of $O (\log (\ply - c))$.
\end {theorem}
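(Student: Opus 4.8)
The plan is to mirror the structure of Observation~\ref{obs:k} and Lemma~\ref{lem:randomlog}, replacing $\ply$ by $\ply-c+1$ wherever the ``room to manoeuvre'' appears. First I would generalise the time decomposition. Given $\tau_i$, let $\script D_i$ be the set of regions containing $p_i=T(\tau_i)$, put $m_i=|\script D_i|\in[c,\ply]$, and for $D_{ij}\in\script D_i$ let $\tau'_{ij}$ be its first exit time after $\tau_i$; define $\tau_{i+1}$ to be the $c$-th largest of the numbers $\tau'_{ij}$, i.e.\ the first instant at which strictly fewer than $c$ regions of $\script D_i$ have covered $T$ continuously since $\tau_i$, and let $\tau_k$ be the last such time. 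The analogue of Observation~\ref{obs:k} is that every $c$-coverage solution has cost at least $c+k$: it uses $c$ distinct regions at time $\tau_0$, and on each interval $(\tau_i,\tau_{i+1}]$ it must bring in at least one new region, since every region it has active at $\tau_i$ lies in $\script D_i$ and at most $c-1$ members of $\script D_i$ are still valid at $\tau_{i+1}$. Hence $\mathrm{OPT}\ge c+k$, and (by Theorem~\ref{thm:greedy}) the offline optimum is attained by the greedy solution, whose number of handovers is exactly $\mathrm{OPT}-c$.

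Next I would describe the algorithm as the natural $c$-fold version of Algorithm~\ref{alg:random}: maintain the candidate pool $\script C\subseteq\script D_i$ of regions covering $T$ continuously since $\tau_i$, together with $c$ currently active regions, one per sequence, all drawn from $\script C$. When an active region no longer contains $T$ we drop it from $\script C$ and from the active set; if $|\script C|\ge c$ we replace it by a region chosen uniformly at random from $\script C$ minus the other $c-1$ active regions, and if $|\script C|$ would fall below $c$ we have reached $\tau_{i+1}$, recompute $\script D_{i+1}$, keep the $c-1$ survivors active, and draw a single fresh random region to complete the new active set. Crucially the active regions carried into each step (and the $c$ regions used at $\tau_0$) are taken to be the longest-surviving members of the relevant $\script D_i$, so that $c-1$ of the sequences mimic the ``fixed intervals'' of the greedy offline solution and only one sequence per step is freshly randomised. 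Validity of the $c$ mutually disjoint tracking sequences is immediate from the construction exactly as in Lemma~\ref{lem:randomlog}, so it remains to bound the expected cost.

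The algorithm's total cost is $c$ (for the regions active at $\tau_0$) plus the number of regions it selects afterwards, and these later selections group into \emph{chains}: for each step $i\ge 1$ one chain starting from the fresh region drawn at $\tau_i$, and for each expiry of one of the $c-1$ carried-over ``fixed'' trackers inside some step one chain starting from its replacement. A chain consists of successive random selections along a single sequence, each drawn from a pool that excludes the $c-1$ co-active regions (hence of size at most $m_i-(c-1)\le\ply-c+1$) and each having strictly larger expiry time than the previous; so a chain is a uniformly random increasing subsequence of a set of at most $\ply-c+1$ values, and by the computation in Lemma~\ref{lem:randomlog} its expected length is $\ln(\ply-c+1)+O(1)=O(\log(\ply-c))$. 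There are $k$ chains of the first kind, and at most $\mathrm{OPT}-c$ of the second kind provided each fixed-tracker expiry can be charged to a distinct handover of the greedy offline optimum. Summing over all chains, the expected cost is $c+O\bigl((k+\mathrm{OPT})\log(\ply-c)\bigr)=O\bigl(\mathrm{OPT}\log(\ply-c)\bigr)$, using $c,k\le\mathrm{OPT}$, which is the claimed competitive ratio.

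The step I expect to be the real obstacle is the charging in the third paragraph: because the greedy optimum is offline, the online algorithm cannot keep its $c-1$ ``fixed'' trackers equal to the $c-1$ longest-surviving regions at every instant, so the correspondence between the algorithm's forced re-selections on those sequences and the greedy handovers is not immediate, and it is further complicated by the disjointness constraint coupling the $c$ sequences. I would attack this with an exchange argument in the spirit of the proof of Theorem~\ref{thm:greedy}: relabel the $c$ sequences so that the long-lived trackers stay matched to a fixed family of sequences, and show that each time such a tracker is replaced one can point to a distinct greedy handover---the moment at which the greedy solution last dropped that region, or an earlier handover it is forced to have performed---so that the charging map is injective. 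A minor but necessary point along the way is the treatment of $\tau_0$, where there are no inherited regions: choosing a uniform random $c$-subset of $\script D_0$ there would already be too expensive (it could cost $\Omega(c\log\ply)$ in a single step against $\mathrm{OPT}=c$), which is why the longest-surviving initialisation is used at the start as well.
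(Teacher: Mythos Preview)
Your proposal follows the paper's approach closely: the same phase decomposition (with $\tau_{i+1}$ the $c$-th largest exit time from $\script D_i$), the same random-replacement algorithm, the same device of ``fixed'' carried-over trackers, and the same charging of fixed-tracker expiries against the greedy optimum. The paper names these expiries ``$F$-events'' and works with their total count $m=\sum_i F_i$, proving $\mathrm{OPT}=\Theta(m)$ in one lemma and expected algorithm cost $O(m\log(\ply-c))$ in another; your chains are the same objects viewed per-sequence, and the exchange argument you sketch for the charging is exactly the content of the paper's first lemma.

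There is, however, one genuine gap. You initialise at $\tau_0$ with the $c$ longest-surviving members of $\script D_0$, on the grounds that a random $c$-subset ``could cost $\Omega(c\log\ply)$'' against $\mathrm{OPT}=c$. But the longest-surviving regions are offline information, so this choice makes the algorithm non-online. The worry is moreover unfounded: with a uniformly random initial $c$-subset and random replacement thereafter, a short induction shows that just after the $j$-th region of $\script D_0$ (in order of exit time) is exited, the active set is a \emph{uniformly random} $c$-subset of the $m_0-j$ still-surviving regions. Hence the $j$-th region is ever active with probability exactly $c/(m_0-j+1)$, and the expected number of handovers before $\tau_1$ is at most $c\sum_{j=c+1}^{m_0}1/j\le c\ln(m_0/c)\le c\ln(\ply-c+1)$, using $\ply/c\le\ply-c+1$. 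Against $\mathrm{OPT}\ge c$ this already gives ratio $O(\log(\ply-c))$ for the first phase. The paper simply picks randomly throughout and absorbs phase~$0$ into its per-phase estimate $F_i+\log\bigl((\ply-c+F_i)/F_i\bigr)$; you should drop the offline initialisation, after which the rest of your argument goes through.
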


    First, we describe the altered algorithm.
    Again, we construct a sequence of time stamps $\tau_i$ with corresponding location of the entity $p_i$ and set of disks $\script D_i$ that contain $p_i$. Obviously, we need to assume $|\script D_i| \geq c$. Now, as before, we maintain a set of candidates $\script C$ that is initialized to $\script D_0$, and whenever the trajectory exits a sensor region, we remove that region from $\script C$ and replace it with a new random element from $\script C$ that is not currently used by another sensor. 
    If there are not enough regions left in $\script C$ (that is, if $|\script C| = c - 1$ after removing the current sensor), we mark the current time as $\tau_{i+1}$, and compute a new set of candidates (note that the $c-1$ remaining old candidates will also be in the new set of candidates). This proceeds until we mark $\tau_k$ as the end of time.
    \maarten {need to add more pseudocode?}

    Now, to analyze the performance of this algorithm, we will define an additional class of events. Just before we reach a new time $\tau_i$, we mark all $c$ sensors currently in use (which must correspond to the current candidate set $\script C$) as \emph {fixed}. Whenever a fixed sensor reaches the end of its interval, we say we have an $F$-event, and after the $F$-event we stop considering this sensor as being fixed. In particular, this means that each new $\tau_i$ corresponds to an $F$-event, but there could be more $F$-events. Figure~\ref {fig:greedy-events} shows the time stamps and $F$-events for a small example. If we define $F_i$ to be the number of $F$-events between $\tau_i$ and $\tau_{i+1}$, including the former and excluding the latter, then $1 \leq F_i \leq c$ for all $i$. We also write $m = \sum_{i=0}^k F_i$

    Now we will show that any solution needs $\Omega(m)$ handovers, and that our algorithm produces a solution with  an expected number of $O (m \log (\ply - c))$ handovers.

    \begin {lemma}
      The optimal solution to the tracking problem uses $\Theta(m)$ handovers.
    \end {lemma}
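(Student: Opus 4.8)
The plan is to prove the two halves $\mathrm{OPT}=\Omega(m)$ and $\mathrm{OPT}=O(m)$ separately, where $\mathrm{OPT}$ denotes the minimum total cost of a $c$-coverage tracking solution. Both halves lean on Theorem~\ref{thm:greedy}: since greedy is optimal, it suffices to analyze the greedy solution rather than an arbitrary optimal one, and by the exchange argument in the proof of Theorem~\ref{thm:greedy} I may assume the greedy solution is \emph{lazy}, meaning that each of its $c$ tracks leaves its current sensor only at the instant the trajectory exits that sensor. Consequently the cost of the greedy solution equals $c$ (the initial assignment) plus exactly the number of track--sensor expiry events it undergoes.

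For $\mathrm{OPT}=\Omega(m)$, I would charge every $F$-event to a distinct handover (up to a constant multiplicity). At the start of step $i$ the $c$ fixed sensors are precisely the $c$ sensors then in use, i.e.\ the current candidate set $\script C$; since the solution under analysis is lazy, it keeps each fixed sensor assigned to its track until that sensor reaches the end of its availability interval, and such an expiry is by definition an $F$-event. Hence each $F$-event is witnessed by a handover; the only loss is that several fixed sensors can expire at the same instant, but our standing assumption that any two region boundaries cross $O(1)$ times and that each trajectory piece crosses each boundary $O(1)$ times caps this multiplicity by a constant, giving $\mathrm{OPT}=\Omega(m)$. The same accounting can be phrased directly for an arbitrary valid solution, without invoking optimality of greedy, by observing that its $c$ disjoint tracks must collectively vacate all $F_i$ of the fixed sensors that expire inside the window $[\tau_i,\tau_{i+1})$.

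For $\mathrm{OPT}=O(m)$, it suffices to show the lazy greedy solution performs only $O(F_i)$ handovers within each window $[\tau_i,\tau_{i+1})$, since summing over the $k$ windows then yields $O\!\left(\sum_i F_i\right)=O(m)$, the $c$ initial assignments contributing only a lower-order additive term. The first handover a given track makes inside a window is forced by the expiry of a fixed sensor, i.e.\ by an $F$-event already counted in $F_i$. The technical core is bounding the subsequent ``thrashing'' through short-lived replacement sensors: within a window the candidate set $\script C$ only shrinks, the $c-1$ candidates inherited from the previous step remain available throughout the window, and greedy's rule of always choosing the available sensor with the latest expiry time forces each replacement to be as long-lived as possible. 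Keeping careful track of which sensors are fixed and which have already been consumed, one shows that the replacements chained inside a single window have strictly decreasing expiry times and are used up after $O(F_i)$ handovers.

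I expect the main obstacle to be exactly this last point: ruling out within-window thrashing and proving the $O(F_i)$-per-window bound rather than the weaker $O(c)$ bound, since it requires simultaneously controlling the monotone shrinking of $\script C$, the long-lived carry-over candidates, and the latest-expiry selection rule, and then propagating the bound cleanly across consecutive step boundaries. The $\Omega(m)$ direction, by contrast, is a fairly direct charging argument once one observes that laziness ties each fixed sensor's expiry to a genuine handover.
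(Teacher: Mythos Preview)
Your proposal has the two directions' difficulty reversed, and the direction you call ``fairly direct'' contains a genuine gap.

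\textbf{The $O(m)$ direction.} Your worry about ``thrashing through short-lived replacement sensors'' is misplaced: greedy never does this. The paper's key observation (which you do not use) is that whenever greedy makes a handover during $[\tau_i,\tau_{i+1})$, there is always at least one available sensor whose interval extends past $\tau_{i+1}$ (namely one of the $c$ sensors that will become fixed at step $i+1$; at most $c-1$ of them are already in use). Since greedy takes the longest available sensor, its choice also extends past $\tau_{i+1}$ and will therefore become fixed before it expires. By induction every sensor greedy ever uses becomes fixed before expiring, so every greedy handover \emph{is} an $F$-event, giving $\mathrm{OPT}\le m$ immediately. No per-window $O(F_i)$ counting is needed.

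\textbf{The $\Omega(m)$ direction.} Your claim that ``the $c$ fixed sensors are precisely the $c$ sensors then in use'' by the lazy greedy solution at time $\tau_i$ is false, and consequently it is \emph{not} true that each $F$-event is witnessed by a greedy handover. The paper explicitly notes this (``not all $F$-events need to be used by the greedy solution'', with a figure where greedy makes no handover at~$\tau_2$). The point is that greedy, during window $i-1$, may already have jumped to a sensor that survives well past $\tau_{i+1}$; that sensor need not be one of the $c$ fixed sensors for step $i$, so when one of those fixed sensors later expires, greedy is simply not holding it. Your alternative phrasing for an arbitrary valid solution fails for the same reason: a valid solution is under no obligation to be holding any particular fixed sensor when it expires, so there is nothing to ``vacate''. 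The paper's fix is a charging argument: if $f$ of the $F$-events in $[\tau_i,\tau_{i+1})$ are unused by greedy, then $f$ of the $c$ sensors that span all of $[\tau_{i-1},\tau_i]$ are not in greedy's hands at the relevant moment, which forces $f$ genuine greedy handovers inside $[\tau_{i-1},\tau_i]$. This charges each unused $F$-event to a used one in the previous window and yields $\mathrm{OPT}\ge m/2$.
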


    \begin {proof}
      By Theorem~\ref {thm:greedy} the greedy approach yields an optimal solution, so we only need to bound the number of handovers in that solution. 

      First, we argue that each handover in the greedy solution coincides with an $F$-event. At the start, the greedy algorithm assigns the $c$ longest available itervals, which all become fixed before they run out.
      By construction, for any $i$ there must be a set of exactly $c$ sensors that cover the whole interval $[\tau_i, \tau_{i+1}]$. Whenever an $F$-event occurs in this interval and a new sensor has to be assigned, the greedy approach will pick the longest available one, and there must be at least one available interval that extends beyond $\tau_{i+1}$, which means it will also become fixed before it runs out.

      Not all $F$-events need to be used by the greedy solution however (as can be seen in Figure~\ref {fig:greedy-events}), so we will now prove that at least half of them are, by charging the unused events to used ones. Suppose an $F$-event is not used, and assume it occurs in the interval $[\tau_i, \tau_{i+1})$. The fact that the sensor became fixed means its interval started already at or before time $\tau_{i-1}$. Suppose there are $f$ unused $F$-events in the time interval $[\tau_i, \tau_{i+1})$. Since there are only $c$ sensors that completely cover the time interval $[\tau_{i-1}, \tau_{i}]$, and $f$ of them are not used, there must have been $f$ handovers in the time interval $[\tau_{i-1}, \tau_{i}]$, which by the previous argument occured during $F$-events.        
    \end {proof}

    \begin {lemma}
      Our randomized solution to the online tracking problem produces a valid solution of expected length $O(m \log (\ply - c))$.
    \end {lemma}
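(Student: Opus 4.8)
The plan is to mirror the proof of Lemma~\ref{lem:randomlog}, charging the expected number of handovers to the $F$-events. Validity is the easy part and goes exactly as in Lemma~\ref{lem:randomlog}: the algorithm only keeps a region assigned to a slot while the trajectory lies inside it and never hands a region to a second slot while it still occupies another, so the $c$ output sequences are genuine tracking sequences and are pairwise disjoint; and by Theorem~\ref{thm:greedy} the times $\tau_i$ at which we are forced to recompute the candidate set are exactly where an optimal $c$-coverage greedy solution would also be forced, so nothing is lost. The work is in the length bound.

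Fix a block $[\tau_i,\tau_{i+1})$ and note that at its left endpoint all $c$ assigned regions are \emph{fixed}. Consequently, every handover other than the $c$ regions chosen at $\tau_0$ belongs to a unique \emph{replacement cascade} rooted at some $F$-event: when a fixed region runs out (an $F$-event) the algorithm draws a replacement, that replacement may itself run out and be redrawn, and so on, while a fresh $F$-event always starts a new cascade. The cascade rooted at a \emph{boundary} $F$-event -- one occurring at some $\tau_i$ and thereby forcing a new step -- consists of just the single new region picked at that $\tau_i$. Hence, writing $\ell_e$ for the number of handovers in the cascade rooted at $F$-event $e$, $$|S_A| \;=\; c \;+\; \sum_{e} \ell_e,$$ where the sum ranges over all $m$ of the $F$-events and $\ell_e=1$ for the $k\le m$ boundary events. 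Since the optimum performs at least $c$ handovers while, by the preceding lemma, it performs $\Theta(m)$ in total, we get $c=O(m)$; together with $k\le m$ it therefore suffices to prove $E[\ell_e]=O(\log(\ply-c+1))$ for a non-boundary $e$.

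So fix a non-boundary $F$-event $e$ inside a block. Each time a slot in that block needs a new region, the algorithm draws uniformly at random from $\script C$ minus the $c-1$ regions currently held by the other slots; this set has size at most $\ply-(c-1)=\ply-c+1$, and every region in it has first-exit-after-$\tau_i$ time no earlier than the current time, because a region leaves $\script C$ the instant its exit time passes. The successive draws within the cascade occur at strictly increasing times, each uniform over the then-available set, and that set only ever shrinks (its members get consumed by this cascade, by the other slots' concurrent cascades, or simply drop out of $\script C$). Thus the cascade is stochastically dominated by the abstract process on a linearly ordered ground set of size $N\le\ply-c+1$ that starts below all elements and repeatedly jumps to a uniformly random element strictly above the current one until it reaches the top; as used in Lemma~\ref{lem:randomlog}, this process makes $H_N=\ln N+O(1)$ jumps in expectation, so $E[\ell_e]\le H_N=O(\log(\ply-c+1))$. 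Summing over the $m$ $F$-events gives $E[|S_A|]=c+k+O(m\log(\ply-c+1))=O(m\log(\ply-c))$, reading the bound as $O(m)$ when $\ply-c\in\{0,1\}$, which likewise holds since then each block yields only $O(1)$ handovers.

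The step I expect to be the main obstacle is making the domination argument of the third paragraph rigorous in the face of up to $c$ cascades running simultaneously within one block, all drawing from the same shrinking pool $\script C$ against a time-varying set of forbidden (in-use) regions. The facts to nail down are: (i) a region held by one slot can never be drawn by another, and a region that has run out is already gone from $\script C$, so within a single slot the drawn regions are distinct with strictly increasing exit times -- this is what lets us line them up against a fixed linear order; and (ii) every mechanism by which a not-yet-drawn region can leave a given cascade's available pool can only shorten that cascade, so $H_{\ply-c+1}$ survives as an upper bound on its expected length. One should also double-check the bookkeeping of which run-outs are genuine handovers versus block-boundary events (the run-out that lowers $|\script C|$ to $c-1$ defers its replacement to the new-step pick of the next block), but as in the decomposition above this only moves lower-order terms.
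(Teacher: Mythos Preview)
Your argument is correct in outline and reaches the same $O(m\log(\ply-c))$ bound as the paper, but you take a different route to get there.  The paper works \emph{block by block}: within $[\tau_i,\tau_{i+1})$ it observes that $c-F_i$ of the fixed sensors survive the entire block, so the $F_i$ active slots draw from a free pool of size at most $\ply-c+F_i$; it then argues (informally) that the total number of draws made by these $F_i$ slots together behaves like ``first pick $F_i$ random elements, then extend by a random increasing sequence above them'', giving $F_i+\log\bigl((\ply-c+F_i)/F_i\bigr)$ expected handovers per block, which is maximized when every $F_i=1$.  You instead work \emph{cascade by cascade}: you root every handover at an $F$-event and bound the expected length of each cascade separately by $H_{\ply-c+1}$, then sum over the $m$ cascades.

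What each approach buys: the paper's per-block bound is a little sharper as an intermediate estimate, since it exploits the fact that several cascades within the same block share and jointly deplete the free pool; your per-cascade bound of $H_{\ply-c+1}$ simply ignores this sharing, which costs you nothing asymptotically but loses the clean $\log\bigl((\ply-c+F_i)/F_i\bigr)$ form.  On the other hand, your decomposition into cascades is conceptually cleaner and makes the link back to Lemma~\ref{lem:randomlog} more transparent, and your bookkeeping for the $c$ initial choices and the boundary $F$-events is explicit where the paper leaves it implicit.

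The step you flag as the main obstacle---the stochastic-domination claim for a single cascade in the presence of concurrently running cascades---is indeed the only place where real work is needed, and the paper's own proof is no more rigorous on the analogous point.  Your two ingredients (i) and (ii) are the right ones: within one block a region, once held by some slot, is released only when it expires and is then removed from $\script C$, so the pool available to a given cascade is a monotonically shrinking subset of its initial pool~$P_0$; and uniform draws from shrinking subsets of~$P_0$ give a random increasing sequence whose expected length is at most $H_{|P_0|}\le H_{\ply-c+1}$.  With those points pinned down your proof is complete.
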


    \begin {proof}
      There are $F_i$ $F$-events during the time interval $[\tau_i, \tau_{i+1})$. When each of these events occurs, we pick a random element from the set of candidates $\script D_i$, that is larger than any other random element we chose so far. We know that $c \leq |\script D_i| \leq \ply$. We also know that $c - F_i$ sensors stay on their intervals during this time, so the set of free candidates is in fact at most $\ply - c + F_i$.
      This results in an expected number of $F_i + \log ((\ply - c + F_i)/(F_i))$ changes (because we first take $F_i$ random elements, and then a random increasing sequence that starts larger than any of these numbers). 

      The expected total number of handovers is $\sum_{i=0}^k (F_i + \log ((\ply - c + F_i)/(F_i)))$, which in the worst case (occuring when $F_i = 1$ for all $i$) comes down to $m \log (\ply - c)$.
    \end {proof}      

    These two lemmas together imply the competitive ratio we claimed.

    \maarten {The other algorithms are easier anyway, so they probably also generalize.}

  \subsection {Lower Bound for the Randomized Case}
    We can extend the construction of Section~\ref {sec:lowerbounds_random} to the current situation, although we now require the regions of $\script D$ to be intervals of two different lengths. Note that when $d > 1$, this is no real restriction since we can always position a set of unit disks in such a way that their intersections with a given line form intervals of two different lengths.
    
    We define a set of intervals $\script D$, consisting of $\ply - c + 1$ unit intervals that form the construction depicted in Figure~\ref {fig:interval-tree}, and additionally $c - 1$ intervals of length $2$ that cover all $\ply - c + 1$ unit intervals.
    
\begin {theorem} \label {thm:trilateration-lowerbound}
  There exists a set $\script D$ of intervals in $\R$ of two different lengths, for which any randomized algorithm to solve the online tracking problem has competitive ratio $\Omega (\log (\ply - c))$.
\end {theorem}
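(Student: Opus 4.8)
The plan is to invoke Yao's principle and reuse the binary‑tree construction and analysis of Theorem~\ref{thm:lowerbound_random} essentially verbatim — now built on the $\ply-c+1$ unit intervals, so with $h=\log_2(\ply-c+1)$ levels — but with two new ingredients: a pigeonhole argument showing that any deterministic algorithm is forced to keep at least one of its $c$ trackers playing the ``tree game'', and a concatenation of many independent copies of a tree trajectory to amortise away the fixed cost of the $c-1$ long intervals. Concretely, I would fix the distribution $R$: draw $N$ trajectories $T_{j_1},\dots,T_{j_N}$ independently and uniformly from the family $\{T_1,\dots,T_{\ply-c+1}\}$ on the unit intervals, and concatenate them, inserting between consecutive copies a path that returns the moving point to the common point $p_0$. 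Each such return costs nothing: it only re‑enters intervals and never exits one, so no online algorithm is forced to react, and the single unit interval that covers a whole $T_{j_k}$ also covers $p_0$. Hence, for every trajectory in the support of $R$, the optimal offline cost is at most $N+c-1$: assign the $c-1$ length‑$2$ intervals permanently (they contain every unit interval, hence every point of the trajectory), and in the remaining tracking sequence use, per segment, the one unit interval that covers it — at most $N$ regions in that sequence. So $E_R[c_{OPT}]\le N+c-1$.

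Next, for an arbitrary deterministic online algorithm $A$ I would bound $E_R[c_A]$ from below segment by segment. Since $\script{D}$ contains only $c-1$ intervals of length $2$ while $A$ must keep $c$ intervals active at every instant, at least one active interval is always one of the $\ply-c+1$ unit intervals; being active it contains the current position, hence it is ``alive'' at the current node of the tree. Exactly as in the proof of Theorem~\ref{thm:lowerbound_random}, such an alive unit interval fails to cover precisely one of the two possible next positions, and because the tree is balanced each of those positions has probability $1/2$ under $R$; therefore $A$ is forced into a handover with probability at least $1/2$ at each of the $h$ levels of each segment. By linearity of expectation, $E_R[c_A]\ge Nh/2$. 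Combining the two bounds, the expected competitive ratio of $A$ under $R$ is at least $\frac{Nh/2}{N+c-1}$, which for $N=c$ is at least $h/4=\tfrac14\log_2(\ply-c+1)=\Omega(\log(\ply-c))$, and Yao's principle transfers this to every randomised algorithm.

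The main obstacle — and the reason the single‑trajectory argument of Theorem~\ref{thm:lowerbound_random} does not apply directly — is that the $c-1$ long intervals are essentially free for the optimum (assigned once, never changed), so on a single tree trajectory the gap is only $\frac{c-1+1+\frac12\log_2(\ply-c+1)}{c}=1+\Theta\!\left(\frac{\log(\ply-c)}{c}\right)$, which collapses to a constant when $c$ is large. The concatenation is what repairs this: it drives the amortised contribution of the $c-1$ permanent intervals to a vanishing fraction of the optimum's cost, while the pigeonhole argument keeps at least one of the online algorithm's trackers bound to the unit‑interval tree game and thus paying $\Omega(\log(\ply-c))$ per segment. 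The one technical point to check carefully is that the return‑to‑$p_0$ connectors and the segment boundaries genuinely cost the optimum nothing extra and give a clever online algorithm no cheap ``reset'' — both follow from the observation that those connectors generate only region‑entry events, which every algorithm (and the optimum) may ignore.
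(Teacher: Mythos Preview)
Your proposal is correct and uses the same construction as the paper: the $\ply-c+1$ unit intervals carrying the binary-tree family of trajectories, together with $c-1$ long intervals that cover them all, and the pigeonhole observation that at every instant at least one of the $c$ active trackers must sit on a unit interval.

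Two differences are worth flagging. First, the paper reduces to Theorem~\ref{thm:lowerbound_random} via a swapping argument---rearrange the $c$ sequences so that one of them uses only unit intervals, then invoke the single-coverage lower bound on that sequence as a black box---whereas you run the per-level ``probability~$1/2$ of a forced handover'' argument directly against whichever unit interval happens to be active at that moment; your route is a bit more self-contained and sidesteps the rearrangement. Second, and more substantively, you make the concatenation step explicit. On a \emph{single} tree trajectory the optimal $c$-coverage cost is $c$, not $1$, so the ratio one extracts is only $1+\Theta(\log(\ply-c)/c)$; the paper's short proof does not spell out how this extra factor of~$c$ in the denominator is absorbed (it is presumably relying on the concatenation remark made after Theorem~\ref{thm:lowerbound_random}). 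Your concatenation of $N\ge c$ independent tree segments, together with the observation that the return-to-$p_0$ connectors generate only region-entry events and hence cost neither the optimum nor the online algorithm anything, is precisely the missing amortisation and makes the $\Omega(\log(\ply-c))$ bound rigorous for all~$c$.
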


    \begin {proof}
      A solution of $c$ disjoint tracking sequences must at any time use at least one of the $\ply - c + 1$ unit intervals, since there are only $c-1$ other intervals.
      We may assume that there is one single sequence that only uses unit intervals, since otherwise we could swap pieces of the sequences. But by Theorem~\ref {thm:lowerbound_random}, no algorithm can produce such a sequence with less than $\Omega (\log h)$ expected handovers, if there are $h$ unit intervals in the construction. Since in our case $h = \ply - c + 1$, the result follows.
    \end {proof}

Our other lower bounds can also be extended just as easily.

\section {Conclusions}

  We studied the online problem of tracking a moving entity among sensors with a minimal number of handovers, combining the kinetic data and online algorithms paradigms.
  We provided several algorithms with optimal competitive ratios. Interestingly, randomized strategies are able to provably perform significantly better than deterministic strategies, and arbitrarily better than stateless strategies (which form a very natural and attractive class of algortihms in our application).
  
  We are able to track multiple entities using the same algorithms, by simply treating them independently. As a future direction of research, it would be interesting to study the situation where each sensor has a maximum capacity $C$, and cannot track more than $C$ different entities at the same time.
  Another possible direction of research is to analyze and optimize the running times of our strategies for particular classes of region shapes or trajectories, something we have made no attempt at.


{\small\raggedright
\bibliographystyle {abbrv}
\bibliography {refs,../../../bibliographies/geom}

\begin{thebibliography}{10}

\bibitem{aeg-kbisd-98}
P.~K. Agarwal, J.~Erickson, and L.~J. Guibas.
\newblock Kinetic {BSP}s for intersecting segments and disjoint triangles.
\newblock In {\em Proc. 9th ACM-SIAM Sympos. Discrete Algorithms}, pages
  107--116, 1998.

\bibitem{aekd-tfmt-10}
A.~Alaybeyoglu, K.~Erciyes, A.~Kantarci, and O.~Dagdeviren.
\newblock {Tracking Fast Moving Targets in Wireless Sensor Networks}.
\newblock {\em IETE Technical Review}, 27(1):46{--}53, 2010.

\bibitem{as-cfcsd-06}
N.~Alon and S.~Smorodinsky.
\newblock {Conflict-free colorings of shallow discs}.
\newblock In {\em Proc. 22nd Symp. on Computational Geometry (SoCG)}, pages
  41{--}43, New York, NY, USA, 2006. ACM.

\bibitem{bgsz-pekds-97}
J.~Basch, L.~J. Guibas, C.~Silverstein, and L.~Zhang.
\newblock A practical evaluation of kinetic data structures.
\newblock In {\em Proc. 13th Annu. ACM Sympos. Comput. Geom.}, pages 388--390,
  1997.

\bibitem{be-occa-98}
A.~Borodin and R.~El-Yaniv.
\newblock {\em {Online computation and competitive analysis}}.
\newblock Cambridge University Press, New York, NY, USA, 1998.

\bibitem{cysa-atdp-05}
Q.~Cao, T.~Yan, J.~Stankovic, and T.~Abdelzaher.
\newblock {Analysis of Target Detection Performance for Wireless Sensor
  Networks}.
\newblock In V.~K. Prasanna, S.~Iyengar, P.~G. Spirakis, and M.~Welsh, editors,
  {\em Distributed Computing in Sensor Systems (DCOSS)}, volume 3560 of {\em
  LNCS}, pages 276{--}292. Springer, 2005.

\bibitem{cmp-mnntu-09}
M.~Cho, D.~Mount, and E.~Park.
\newblock {Maintaining Nets and Net Trees under Incremental Motion}.
\newblock In Y.~Dong, D.-Z. Du, and O.~Ibarra, editors, {\em Int. Symp. on
  Algorithms and Computation (ISAAC)}, volume 5878 of {\em LNCS}, pages
  1134{--}1143. Springer, 2009.

\bibitem{cllr-bbcr-97}
M.~Chrobak, L.~L. Larmore, C.~Lund, and N.~Reingold.
\newblock {A better lower bound on the competitive ratio of the randomized
  2-server problem}.
\newblock {\em Inform. Process. Lett.}, 63(2):79{--}83, 1997.

\bibitem{eg-snprn-08}
D.~Eppstein and M.~T. Goodrich.
\newblock {Studying (non-planar) road networks through an algorithmic lens}.
\newblock In {\em GIS '08: Proceedings of the 16th ACM SIGSPATIAL international
  conference on Advances in geographic information systems}, pages 1{--}10, New
  York, NY, USA, 2008. ACM.

\bibitem{egt-gortc-09}
D.~Eppstein, M.~T. Goodrich, and L.~Trott.
\newblock {Going off-road: transversal complexity in road networks}.
\newblock In {\em Proc. 17th ACM SIGSPATIAL Int. Conf. on Advances in
  Geographic Information Systems (ACM GIS)}, pages 23{--}32, New York, NY, USA,
  2009. ACM.

\bibitem{gtzts-stpea-10}
O.~Ghica, G.~Trajcevski, F.~Zhou, R.~Tamassia, and P.~Scheuermann.
\newblock {Selecting Tracking Principals with Epoch Awareness}.
\newblock In {\em Proc. 18th ACM SIGSPATIAL Internat. Conf. on Advances in
  Geographic Information Systems (ACM GIS)}, 2010.

\bibitem{ghsz-kcud-01}
L.~Guibas, J.~Hershberger, S.~Suri, and L.~Zhang.
\newblock {Kinetic Connectivity for Unit Disks}.
\newblock {\em Discrete Comput. Geom.}, 25(4):591{--}610, 2001.

\bibitem{g-kdssar-98}
L.~J. Guibas.
\newblock Kinetic data structures --- a state of the art report.
\newblock In P.~K. Agarwal, L.~E. Kavraki, and M.~Mason, editors, {\em Proc.
  Workshop Algorithmic Found. Robot.}, pages 191--209. A. K. Peters, Wellesley,
  MA, 1998.

\bibitem{hh-ttqw-05}
G.~He and J.~Hou.
\newblock {Tracking targets with quality in wireless sensor networks}.
\newblock In {\em 13th IEEE Conf. on Network Protocols (ICNP)}, pages 1{--}12,
  2005.

\bibitem{hellebrandt2002estimating}
M.~Hellebrandt, R.~Mathar, and M.~Scheibenbogen.
\newblock {Estimating position and velocity of mobiles in a cellular radio
  network}.
\newblock {\em IEEE Transactions on Vehicular Technology}, 46(1):65{--}71,
  2002.

\bibitem{IraSei-TCS-98}
S.~Irani and S.~Seiden.
\newblock {Randomized algorithms for metrical task systems}.
\newblock {\em Theor. Comput. Sci.}, 194(1{--}2):163{--}182, 1998.

\bibitem{mttv-sspnn-97}
G.~L. Miller, S.-H. Teng, W.~Thurston, and S.~A. Vavasis.
\newblock Separators for sphere-packings and nearest neighbor graphs.
\newblock {\em J. ACM}, 44:1--29, 1997.

\bibitem{mnpsw-cfim-04}
D.~M. Mount, N.~S. Netanyahu, C.~D. Piatko, R.~Silverman, and A.~Y. Wu.
\newblock {A computational framework for incremental motion}.
\newblock In {\em Proc. 20th Symp. on Computational Geometry (SoCG)}, pages
  200{--}209, New York, NY, USA, 2004. ACM.

\bibitem{os-rsplfo-96}
M.~Overmars and F.~van~der Stappen.
\newblock {Range Searching and Point Location among Fat Objects}.
\newblock {\em J. Algorithms}, 21(3):629{--}656, 1996.

\bibitem{ppk-eqttt-03}
S.~Pattem, S.~Poduri, and B.~Krishnamachari.
\newblock {Energy-Quality Tradeoffs for Target Tracking in Wireless Sensor
  Networks}.
\newblock In F.~Zhao and L.~Guibas, editors, {\em Information Processing in
  Sensor Networks}, volume 2634 of {\em LNCS}, pages 553{--}553. Springer,
  2003.

\bibitem{st-aelup-85}
D.~D. Sleator and R.~E. Tarjan.
\newblock Amortized efficiency of list update and paging rules.
\newblock {\em Commun. ACM}, 28:202--208, 1985.

\bibitem{tekinay1991handover}
S.~Tekinay and B.~Jabbari.
\newblock {Handover and channel assignment in mobile cellular networks}.
\newblock {\em IEEE Communications Magazine}, 29(11):42{--}46, 1991.

\bibitem{v-basdg-06}
E.~J. van Leeuwen.
\newblock {Better Approximation Schemes for Disk Graphs}.
\newblock In L.~Arge and R.~Freivalds, editors, {\em Scandinavian Workshop on
  Algorithm Theory (SWAT)}, volume 4059 of {\em LNCS}, pages 316{--}327.
  Springer, 2006.

\bibitem{yl-qtcbi-10}
Z.~Yang and Y.~Liu.
\newblock {Quality of Trilateration: Confidence-Based Iterative Localization}.
\newblock {\em IEEE Trans. on Parallel and Distributed Systems},
  21(5):631{--}640, 2010.

\bibitem{Yao-FOCS-77}
A.~Yao.
\newblock {Probabilistic computations: Toward a unified measure of complexity}.
\newblock In {\em 18th IEEE Symp. on Foundations of Computer Science (FOCS)},
  pages 222{--}227, 1977.

\bibitem{yz-mdot-09}
K.~Yi and Q.~Zhang.
\newblock {Multi-dimensional online tracking}.
\newblock In {\em Proc. of the 20th ACM-SIAM Symp. on Discrete Algorithms
  (SODA)}, pages 1098{--}1107. SIAM, 2009.

\bibitem{zjr-iddsc-02}
F.~Zhao, J.~Shin, and J.~Reich.
\newblock {Information-driven dynamic sensor collaboration}.
\newblock {\em IEEE Signal Processing Magazine}, 19(2):61{--}72, 2002.

\end{thebibliography}
}

\end{document}